\theoremstyle{plain}
\newtheorem{Theorem}{Theorem}
\newtheorem{Lemma}[Theorem]{Lemma}
\theoremstyle{definition}
\newtheorem{Definition}[Theorem]{Definition}
\theoremstyle{remark}
\newcommand{\T}{\operatorname{T}}
\newcommand{\Co}{\operatorname{Co}}
\newcommand{\Rec}{\operatorname{Rec}}
\newcommand{\maxdatatype}{\D\times\D\times\N\times(\ID \to \ID)\times\Z^*\times(\Q\to\Q)}
\title{Implementing evaluation strategies for continuous real functions\textsuperscript{\small1}}
\author{Michal Kone\v{c}n\'y, Eike Neumann\\[1ex] \texttt{\small m.konecny@aston.ac.uk, neumaef1@gmail.com}}
\affil{Aston University, Birmingham, UK}
\date{}
\newcommand{\ie}{\textit{i.e.}, ~}
\newcommand{\eg}{\textit{e.g.}, ~}
\newcommand{\R}{\mathbb{R}}
\newcommand{\Z}{\mathbb{Z}}
\newcommand{\Q}{\mathbb{Q}}
\newcommand{\D}{\mathbb{D}}
\newcommand{\N}{\mathbb{N}}
\newcommand{\I}{{\mathcal{I}}}
\newcommand{\IQ}{{\I\Q}}
\newcommand{\ID}{{\I\D}}
\newcommand{\CI}{C\left([-1,1]\right)}
\newcommand{\dom}{\operatorname{dom}}
\newcommand{\var}{\operatorname{var}}
\newcommand{\eval}{\operatorname{eval}}
\renewcommand{\div}{\operatorname{div}}
\newcommand{\primit}{\operatorname{primit}}
\newcommand{\paramax}{\operatorname{paramax}}
\newcommand{\rangemax}{\operatorname{rangemax}}
\newcommand{\Fun}{\operatorname{Fun}}
\newcommand{\BFun}{\operatorname{BFun}}
\newcommand{\DBFun}{\operatorname{DBFun}}
\newcommand{\Poly}{\operatorname{Poly}}
\newcommand{\PPoly}{\operatorname{PPoly}}
\newcommand{\Frac}{\operatorname{Frac}}
\newcommand{\LPoly}{\operatorname{LPoly}}
\newcommand{\LPPoly}{\operatorname{LPPoly}}
\newcommand{\LFrac}{\operatorname{LFrac}}
\newcommand{\Set}[2]{\left\{ #1 \; \mid \; #2 \right\}}
\newcommand{\fr}{{\mathit{fr}}}
\newcommand{\FR}{{\mathit{FR}}}
\newcommand{\FD}{{\mathbb{FD}}}
\newcommand{\pp}{{\mathit{pp}}}
\newcommand{\PP}{{\mathit{PP}}}
\newcommand{\Local}{\operatorname{Local}}
\begin{document}

\maketitle

\begin{abstract}

We give a technical overview of our exact-real implementation
of various representations of the space of continuous
unary real functions over the unit domain and a family
of associated (partial) operations, including integration, 
range computation, as well as pointwise addition, multiplication, division, sine, cosine, square root and maximisation.

We use several representations close to the usual theoretical
model, based on an oracle that evaluates the function at a point or over an interval. 
We also include several representations based on an oracle that
computes a converging sequence of rigorous (piecewise or one-piece) polynomial and rational approximations over the whole unit domain.
Finally, we describe ``local'' representations that combine both approaches, ie oracle-like representations that return a rigorous symbolic approximation of the function over a requested interval sub-domain with a requested effort. 

See also our paper ``Representations and evaluation strategies for feasibly approximable functions'' which compares the efficiency of these representations and algorithms and also formally describes and analyses one of the key algorithms, namely a polynomial-time division of functions in a piecewise-polynomial representation.  We do not reproduce this division algorithm here.

\end{abstract}

\footnotetext[1]{\includegraphics[scale=0.04]{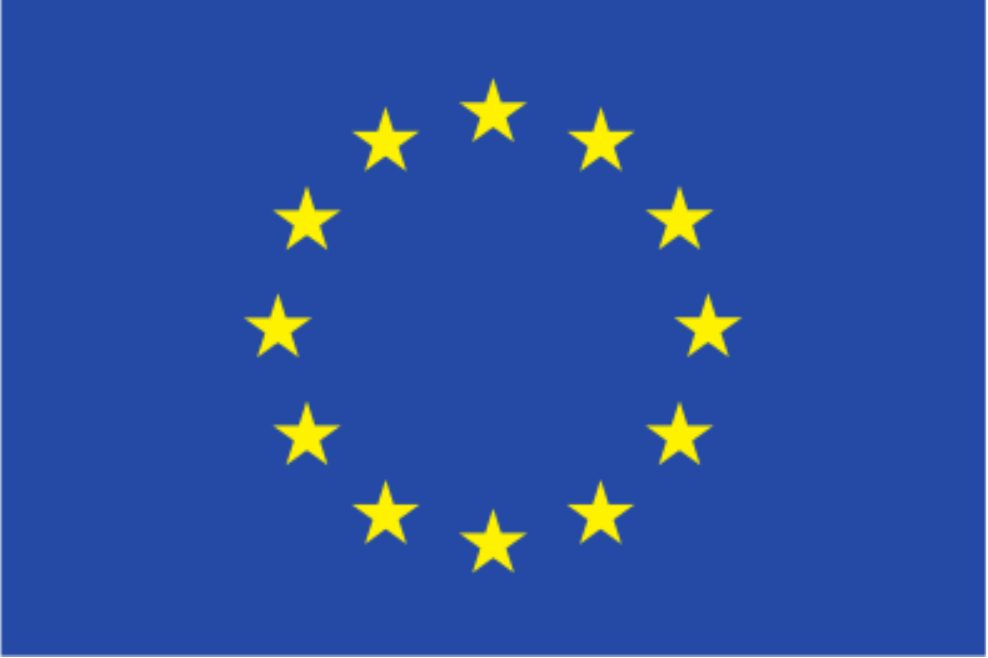} This project has received funding from the European Union’s Horizon 2020 research and innovation programme under the Marie Skłodowska-Curie grant agreement No 731143.}

\section{Exact real computation}

Exact real computation is an approach to numerical computation where real numbers appear as first class objects and the semantics of operations over them exactly agrees with the usual mathematical operations.
In practice, real number objects support an extraction of approximations to any requested arbitrarily high accuracy.
This approach is applied not only to real numbers but also to elements of other continuum spaces, for example, real continuous functions, real differentiable functions, real analytic functions and various subsets of Euclidean spaces.
Exact real computation is an implementation of the theory of computable analysis.
For background on computability in analysis see \eg \cite{SchroederPhD, PaulyRepresented, Weih, PourElRichards}.

We start with a brief overview of how real numbers are represented and used in  AERN2, our Haskell library for exact real computation.

\subsection{Balls}

An approximation to a real number with a known bound of the approximation error is represented by a ``ball'' \((c,e)\) 
with a centre $c\in\D$ and a non-negative radius $e\in \D$, where
$\D$ is the set of dyadic numbers.
Let $\ID$ be the set of such balls (equivalently, intervals with dyadic endpoints) and $\ID[-1,1] = \ID\cap [-1,1]$.

The centre and radius are represented in a floating-point-like format\footnote{Currently, the centre and radius are MPFR numbers by default.}.
The centre has an unlimited precision (except for physical computer constraints) and the radius has a mantissa precision fixed at 53 bits.
To emphasise the use of multi-precision numbers, the ball type is called
\lstinline{MPBall} in our Haskell implementation%
\footnote{To reproduce the script, start \texttt{ghci} using
\texttt{stack ghci aern2-fnreps:exe:aern2-fnreps-ops} or equivalent.}:

\newcommand{\ghciprompt}{\text{\tiny (ghci prompt)}}

\begin{lstlisting}{language=shell}
$\ghciprompt$> :t mpBall 1
mpBall 1 :: MPBall
\end{lstlisting}

An \lstinline{MPBall} implicitly holds a floating-point precision, which defines how much rounding is used in operations involving the ball.
In binary operations, the higher of the two precisions is used.
This is apparent in the following computations:

\begin{lstlisting}{language=shell}
$\ghciprompt$> t1 = mpBallP (prec 10) (1 /! 3)
$\ghciprompt$> t2 = mpBallP (prec 20) (1 /! 3)
$\ghciprompt$> t1
[0.3335 $\pm$ <2^(-11)]
$\ghciprompt$> t2
[0.33333349 $\pm$ <2^(-21)]
$\ghciprompt$> t1 + t2
[0.66683006 $\pm$ <2^(-10)]
$\ghciprompt$> getPrecision (t1 + t2)
Precision 20
\end{lstlisting}

\subsection{Bottom-up type derivations}

\lstinline{MPBall} implements the standard Haskell numerical type classes \lstinline{Num}, \lstinline{Fractional} and \lstinline{Floating}.  
Nevertheless, by default AERN2 modules use \texttt{mixed-types-num}\footnote{\url{https://hackage.haskell.org/package/mixed-types-num}}, an altered version of the Haskell standard Prelude, in which the numerical and related type classes are replaced by different ones, inspired by dynamically-typed languages.
The type of an expression is derived  bottom-up, \ie integer literals are always of type \lstinline{Integer} and rational literals of type \lstinline{Rational}, binary operations and relations support operands of different types, defining the result type depending on the operand types:

\begin{lstlisting}{language=shell}
$\ghciprompt$>  n = 1; q = 0.1
$\ghciprompt$> :t (n,q)
(n,q) :: (Integer, Rational)
$\ghciprompt$> :t n + q
n + q :: Rational
$\ghciprompt$> :t t1 + n
t1 + n :: MPBall
\end{lstlisting}

Partial operations such as division return values in an error-collecting monad, instead of throwing an exception:

\begin{lstlisting}{language=shell}
$\ghciprompt$> 1/0 :: CN Rational
{[(ERROR,division by 0)]}
$\ghciprompt$> 1/(t1-t1) :: CN MPBall
{[(POTENTIAL ERROR,division by 0)]}
\end{lstlisting}

The latter example above indicates a potential error instead of a certain error because the ball \lstinline{t1-t1} contains zero but also non-zero values and the ball \lstinline{t1} is treated as a set of values, in which only one is the intended one but we do not know which one.
The dependency between the two copies of \lstinline{t1} is lost.

The name \lstinline{CN} refers to a type function whose canonical name is much longer, which means that types reported by ghc are harder to read:

\begin{lstlisting}{language=shell}
$\ghciprompt$> :t 1/3
1/3 :: CollectErrors [(ErrorCertaintyLevel, NumError)] Rational
\end{lstlisting}

For convenient propagation, values in this monad can be used as operands:

\begin{lstlisting}{language=shell}
$\ghciprompt$> (1 + 1/3)/2 :: CN Rational
2 % 3
\end{lstlisting}

When there is no risk of a numerical error, one can use alternative,  exception-throwing operations:

\begin{lstlisting}{language=shell}
$\ghciprompt$> :t 1/!3
1/!3 :: Rational
\end{lstlisting}

\subsection{Partial ball comparisons}

Interval and ball comparisons can be blurred when balls overlap.
To facilitate safe and convenient ball comparisons, comparison relations over \lstinline{MPBall} do not return a \lstinline{Bool} as in standard Prelude but \lstinline{Maybe Bool}:

\begin{lstlisting}{language=shell}
$\ghciprompt$> 0 < t1
Just True
$\ghciprompt$> 1/3 < t1
Nothing
$\ghciprompt$> 0 == t1
Just False
$\ghciprompt$> t1 == t1
Nothing
\end{lstlisting}

For convenience, there are also ``surely'' and ``maybe'' comparison relations that return a \lstinline{Bool}:

\begin{lstlisting}{language=shell}
$\ghciprompt$> t1 !==! t1
False
$\ghciprompt$> t1 ?==? t1
True
\end{lstlisting}

\subsection{Real numbers}

Exact real numbers in AERN2 are represented as functions from an accuracy specification to an \lstinline{MPBall}.  
Accuracy is specified using \lstinline{bitsSG s g} where \lstinline{s} is a strict bound on the ball radius in terms of bits and \lstinline{g} is an indicative guide for the size of the error:

\begin{lstlisting}{language=shell}
$\ghciprompt$> :t pi
pi :: CauchyReal
$\ghciprompt$> pi ? (bitsSG 10 20) :: MPBall
[3.14159262180328369140625 $\pm$ <2^(-23)]
$\ghciprompt$> pi^!2 ? (bitsSG 10 20) :: MPBall
[9.869604110717769884786321199499070644378662109375 $\pm$ <2^(-19)]
\end{lstlisting}

Comparisons for real numbers are not decidable.  
This means that real number comparisons cannot have a simple Boolean return type.
In AERN2, such comparisons return a function from accuracy to \lstinline{Maybe Bool}, which allows us to try and decide comparison relation with a certain effort and test whether it succeeded or failed:

\begin{lstlisting}{language=shell}
$\ghciprompt$> (pi < pi + (0.1)^100) ? (bitsSG 10 20)
Nothing
$\ghciprompt$> (pi < pi + (0.1)^100) ? (bitsSG 1000 1000)
Just True
$\ghciprompt$> (pi < pi) ? (bitsSG 1000 1000)
Nothing
\end{lstlisting}

Also, due to the infinite nature of real numbers, partial functions into the real numbers usually cannot simply return a \lstinline{CN CauchyReal}.
Instead, partial real functions usually return a \lstinline{CauchyRealCN}, which encodes a function from accuracy to \lstinline{CN MPBall}:

\begin{lstlisting}{language=shell}
$\ghciprompt$> (sqrt pi) :: CauchyRealCN
[1.77245385090551602729816... $\pm$ <2^(-122)]
$\ghciprompt$> (sqrt pi ? (bitsS 10)) :: CN MPBall
[1.77245385083369910717010498046875 $\pm$ <2^(-32)]
\end{lstlisting}

The above example also demonstrates that the formatting function \lstinline{show}, when applied to a real number, uses a default accuracy.


\subsection{Landscape of representations}

Various approaches to representing real numbers have been proposed and implemented.
Some theoretically oriented works represent real numbers  as steams of signed binary digits or generalised digits.
For example, Escard\'o's RealPCF \cite{Escardo97-thesis} uses streams of contractive affine refinements and Potts et al's IC-Reals \cite{Potts98-thesis} uses streams of contractive linear fractional transformations.

According to the results of the competitions \cite{blanck2000era-competition, niqui2009many},  stream-based approaches tend to be relatively slow.  
The fastest implementations of exact real arithmetic seem to be iRRAM \cite{muller2000irram}, written in C++.
This package first uses interval/ball arithmetic at a fixed precision.
If the precision is insufficient to decide branch conditions or final results are not sufficiently accurate, the computation is scrapped and restarted with a higher precision.
This approach is repeated until the program succeeds.
In this approach there are no real number objects but the program as a whole has a real number semantics.
AERN2 allows one to write a program featuring real numbers as an abstract type and execute the same program either in the iRRAM manner or using exact real objects.

There seems to be a consensus that the most practically feasible representation of real numbers is via sequences or nets of interval/ball approximations, whether or not the sequences appear directly as first-class objects or is obtained indirectly via iRRAM-style re-computations with increasing precision.

The situation is much less clear regarding the representation of continuous real functions.
There are many candidate representations and comparing their computational complexity and practical performance is a matter of ongoing research.
This paper gives an overview of the most prominent candidate representations and their implementations in AERN2, while our other paper \cite{MainArticle} focuses on comparing these representations.
We first review which operations involving continuous functions are to be implemented for these representations.

\section{Operations of interest on $\CI$}

In this section we switch attention to the space $\CI$ of continuous real functions over the interval $[-1,1]$ and various operations of interest over this space.
First of all, it should be possible to evaluate a real function at a point:

\begin{itemize}
\item $\eval \colon \CI \times [-1,1] \to \R, \; f,x \mapsto f(x)$
\end{itemize}

Currently, our main applications for the arithmetic over $\CI$ are range computation and integration for unary real functions:

\begin{itemize}
\item $\rangemax \colon \CI \to \R, \; f \mapsto \max_{x\in[-1,1]}f(x)$
\item $\int \colon \CI \to \R, \; f \mapsto \int_{-1}^{1}f(x)\,\mathrm{d}x$
\end{itemize}

In AERN2 the operation $\rangemax$ is expressed as follows:

\begin{lstlisting}[language=Haskell]
-- type:
maximumOverDom :: 
  CanMaximiseOverDom f d => 
  f -> d -> MaximumOverDomType f d
-- usage:
m = maximumOverDom f (dyadicInterval (-1,1))
\end{lstlisting}

The type class constraint \lstinline{CanMaximiseOverDom f d} declares that the specific representation type \lstinline{f} can use the function \lstinline{maximumOverDom} with domains of type \lstinline{d}.
The type function \lstinline{MaximumOverDomType} specifies how the resulting real number will be represented.
The domain is typically an \lstinline{Interval Dyadic} and the result is typically either \lstinline{MPBall} or \lstinline{CauchyReal}.
If the function is a ball-like function approximation similar to \lstinline{MPBall}, the result is an \lstinline{MPBall}, and,
if the function is exact, the result is a \lstinline{CauchyReal}.

The integral operation is represented in AERN2 analogously, using the function \lstinline{integrateOverDom}, type class \lstinline{CanIntegrateOverDom}, and type function\\ \lstinline{IntegralOverDomType}.

Typically, a unary continuous real function is given by a symbolic expression with one real variable, such as:

\begin{lstlisting}[language=Haskell]
bumpy x = sin(10*x) `max` cos(11*x)
\end{lstlisting}

Nevertheless, sometimes a function is given without a symbolic representation, for example, if it is a solution of a differential equation or it comes from another ``black box'' or external source.
Here we focus on computation that can be applied not only to functions that are given symbolically, but to any (unary, bounded-domain) continuous real function.
(In the following section we define a number of representations that can accommodate all $\CI$ functions.)
We will therefore not make any use of a symbolic representation even when we have one.

To build new functions from existing functions, we should be able to apply common real operations pointwise to continuous functions:

\begin{itemize}
\item $+\colon \CI \times \CI \to \CI, \; (f,g) \mapsto f + g.$
\item $\times\colon \CI \times \CI \to \CI, \; (f,g) \mapsto f \cdot g.$ 
\item $- \colon \CI \to \CI, \; f \mapsto -f$.
\item $\div \colon \subseteq \CI \times \CI \to \CI, \; (f, g) \mapsto f/g$, where 
\[\dom(\div) = \Set{(f,g) \in \CI \times \CI}{g(x) \geq 1 \text{ for all }x \in [-1,1]}.\]
\item $\max \colon \CI \times \CI \to \CI, \; (f,g) \mapsto \max(f,g).$
\item $\sqrt{|\cdot|} \colon \CI \to \CI, \; f \mapsto \sqrt{|f|}.$
\end{itemize}

We also require pointwise applications of common trigonometric functions.
In Haskell and AERN2, there is no need to define new syntax for these pointwise function operations.
We simply make our function types instances of the numeric type classes and use the standard operation syntax.
For example, the parameter \lstinline{x} in for our polymorphic function \lstinline{bumpy} can be a function as well as a number:

\begin{lstlisting}
bumpy pi :: CauchyReal
x_BF :: UnaryBallFun
bumpy x_BF :: UnaryBallFun
\end{lstlisting}

where \lstinline{x_BF} is the identity function \(\lambda x.x\) over real numbers and \lstinline{UnaryBallFun} is one of our representations of $\CI$ functions.
The result function of type \lstinline{UnaryBallFun} is built by the pointwise function operations $\times$, $+$, $\sin$, $\cos$, and $\max$ as well as by implicit coercions of integers into constant functions.

The identity function over some interval domain \lstinline{d} is typically defined as follows:

\begin{lstlisting}
x_BP = varFn d ()
\end{lstlisting}

Here \lstinline{()} is a dummy variable name.
The polymorphic function \lstinline{varFn} requires a variable name so that it can be used also for building multi-variate projection functions \(\lambda x_1\ldots x_n.x_i\).
The type of the variable name depends on the function type.
In our case the variable type is always the unit type \lstinline{()} with the unique dummy value \lstinline{()}.

We consider also a few non-pointwise operations, namely composition, primitive function and parametric maximisation, although AERN2 does not yet have implementations of these for all of our representations:

\begin{itemize}
\item $\circ \colon \subseteq \CI \times \CI \to \CI, \; (f,g) \mapsto f \circ g,$ where 
\[\dom(\circ) = \Set{(f,g) \in \CI \times \CI}{g([-1,1]) \subseteq [-1,1]}. \]
\item $\primit \colon \CI \to \CI, \; f \mapsto \lambda t.\int_{-1}^t f(s) \operatorname{ds}.$
\item $\paramax \colon \CI \to \CI, \; f \mapsto \lambda t.\max\Set{f(s)}{s \in [-1,t]}.$
\end{itemize}

Note that differentiation is not a computable operation on the subset of differentiable functions in $\CI$.

\section{Representations of $\CI$}\label{section: representations}

In this section we introduce a number of representations of $\CI$ and
their AERN2 Haskell implementations.
The Haskell implementations actually support also partial real functions using the \lstinline{CN} monad for the result values.
Nevertheless, we will focus only on total functions in this paper.

Some representations encode a real function as a program-level function that returns approximations of the functions at different points or over small intervals.
Other representations use convergent collections of polynomials or similar approximations, each close to the function over the whole of its domain.
Finally, locally approximating representations combine features of both of these approaches.

\subsection{Point-evaluating representations}

Package \texttt{aern2-fun} defines the following representations of $\CI$\footnote{Actually, these representations support $C(D)$ over any compact real interval $D$, not only $[-1,1]$.}:

\begin{itemize}
  \item BFun, Haskell type \lstinline{UnaryBallFun}, encodes an $f\in\CI$ by a Haskell function $\varphi$ of type \lstinline{CN MPBall -> CN MPBall}, with the properties:
  \[
      \begin{array}{l}
      \forall x\in I \in \ID[-1,1].\quad 
      \text{if }
      \varphi(I)\text{ is defined, then } f(x) \in \varphi(I)
      \\[1ex]
      (\forall i. x\in I_i \subseteq [-1,1])
      \text{ and }
      (\lim_{i\to\infty} |I_i| = 0)
      \implies
      \\\qquad 
      \varphi(I_i) \text{ is defined for all sufficiently large }i
      \\\qquad
      \text{and }\lim_{i\to\infty} |\varphi(I_i)| = 0
      \end{array}
  \]
  \item DBFun, Haskell type \lstinline{UnaryBallDFun}, encodes an $f\in\CI$ by a pair \(\varphi, \varphi'\) of \lstinline{UnaryBallFun}, with $\varphi$ representing $f$ exactly as in BFun, and $\varphi'$ representing the (partially defined) derivative $f'$ in the following sense:
  \[
      \begin{array}{l}

      \text{Whenever } \varphi'(I)\text{ is defined,} 
      \\
      \text{the absolute value of } \varphi'(I)
      \text{ is a Lipschitz constant of } f \text{ over }I.
      \\[2ex]

      \text{For all } x\in \D[-1,1] \text{ where } f'(x) \text{ is defined, it holds:}
      \\[1ex]
      \quad\forall I \in \ID[-1,1].\quad 
      \text{if }
      x \in I \text{ and }
      \varphi'(I)\text{ is defined, then } f'(x) \in \varphi'(I)
      \\[1ex]
      \quad(\forall i. x\in I_i \subseteq [-1,1])
      \text{ and }
      (\lim_{i\to\infty} |I_i| = 0)
      \implies
      \\\quad\qquad 
      \varphi'(I_i) \text{ is defined for all sufficiently large }i
      \\\quad\qquad
      \text{and }\lim_{i\to\infty} |\varphi'(I_i)| = 0
      

      \end{array}
  \]
  (For convenience, the pair \(\varphi, \varphi'\) is encoded as two elements of a list.)

  \item Fun, Haskell type \lstinline{UnaryModFun}, encodes an $f\in\CI$ by a pair of Haskell functions:
  \[
    \begin{array}{lcl}
    \varphi : \D[-1,1] \rightharpoonup \R & \text{Haskell type} & \texttt{\lstinline{Dyadic -> CauchyRealCN}}
    \\
    \omega : \ID[-1,1] \to \N \to \N & \text{Haskell type} & \texttt{\lstinline{MPBall -> Integer -> Integer}}
    \end{array}
  \]
  with the properties:
  \[
    \begin{array}{l}
    \forall x\in \D[-1,1].\, \varphi(x) = f(x)
    \\[1ex]
    \forall x,y\in I \in \ID[-1,1].\, \forall n\in\N .\quad
    |x-y|\leq 2^{-\omega(I,n)} \implies |f(x)-f(y)| \leq 2^{-n}
    \end{array}
  \]
\end{itemize}

Variants of Fun are typically used in studies that focus on computability.
The function $\omega$ is a (localised) modulus of continuity of the encoded function.

Variants of BFun are typically used in implementations of exact real arithmetic.  
BFun is usually implemented via an arbitrary-precision interval arithmetic.
As there are good implementations of interval arithmetic, BFun is a sensible practical choice.
DBFun is a simple adaptation of BFun for (almost everywhere) differentiable functions.

\subsubsection{Basic Operations}

Pointwise operations, constant function and identity function constructions are implemented for BFun and DBFun by the usual interval extensions of the operations and functions (and also their derivatives in case of DBFun).
For example, pointwise division for DBFun is the following mapping on pairs of partial ball functions:
\[
  \frac{(f,f')}{(g,g')}  =
  \left(
    \lambda x.\frac {f(x)}{g(x)}, \;
    \lambda x.\frac{g(x)\cdot f'(x)-f(x)\cdot g'(x)}{g(x)^2}
  \right)
\]

Pointwise operations for Fun encodings are more subtle than those for BFun due to the need to define a valid modulus of continuity for the result.
For example, the Fun reciprocal is the following mapping of function encodings:
\[
  \frac{1}{(\varphi, \omega)} =
  \left(
      \lambda x. \frac{1}{\varphi(x)}, \lambda D i .  \left(i-2*\left(\mathop{\mathrm{lognorm}}(\varphi(D)) - 1\right) + 10\right)
  \right)
\] 
where $\mathrm{lognorm}(\varphi(D))$ returns an integer $n$ with $-(2^n) \leq \varphi(D) \leq 2^n$, and $n$ is the smallest or the second-smallest integer with this property.

\subsection{Globally approximating representations}

The representations Fun, BFun and DBFun all have in common that they are local in the sense that they describe a function a point at a time, or a small neighbourhood at a time.
The following representations are global in the sense that they provide information about a function over its complete interval domain \(D\) at once.
The simplest such representation describes a function by an indexed family of ``polynomial balls'' that converge to the function.

\subsubsection{Polynomial balls}

A polynomial ball is a pair \((p,e)\) where the centre $p$ is a univariate polynomial with dyadic coefficients and $e$ is a non-negative dyadic radius.
Let $\D[x]$ denote the set of polynomials with dyadic coefficients
and $\ID[x]$ the set of all polynomial balls.
For $f\in C(D)$ we write \(f\in [p\pm e]\) iff 
\(\forall x\in D.\, f(x)\in[p(x)\pm e]\).
\\
For $b=(p,e)\in\ID[x]$, let $|b| = 2e$.

We use the AERN2 type \lstinline{ChPoly MPBall} to represent \(\ID[x]\).
\lstinline{ChPoly t} is a type of polynomials in the Chebyshev basis with coefficients of type \lstinline{t}\footnote{%
Note that a polynomial has dyadic coefficients in the Chebyshev basis iff it has dyadic coefficients in the monomial basis.}.
Technically, \lstinline{ChPoly MPBall} stands for polynomials with ball coefficients, but we use only a subset which coincides with polynomial balls.
To do this, we enforce the invariant that all coefficients except the coefficient of the constant term have zero radius, \ie they are dyadic numbers.
The radius of the constant term is the radius of the polynomial ball.
The following example illustrates the basics of working with type \lstinline{ChPoly MPBall}: 

\begin{lstlisting}{language=shell}
$\ghciprompt$> x = varFn (unaryIntervalDom, bits 10) () :: ChPoly MPBall
$\ghciprompt$> (x+1)^!2
[1 $\pm$ 0] + [2 $\pm$ 0]*x + [1 $\pm$ 0]*x^!2

$\ghciprompt$> import AERN2.Poly.Cheb
Demo AERN2.Poly.Cheb> reduceDegree 1 ((x+1)^!2)
[1.5 $\pm$ <2^(-1)] + [2 $\pm$ 0]*x
\end{lstlisting}

Note that when constructing a \lstinline{ChPoly MPBall}, it is necessary to specify an interval domain for the ball.
Since the Chebyshev basis works well only on the domain $[-1,1]$, the type \lstinline{ChPoly t} contains a specification of a dyadic interval domain $[a,b]$ which is transparently translated to the internal domain $[-1,1]$.
In the above example we used the domain $[-1,1]$.  
Repeating it over the domain $[0,2]$ gives a different degree reduction:

\begin{lstlisting}{language=shell}
$\ghciprompt$> x = varFn (dyadicInterval (0,2), bits 10) () :: ChPoly MPBall
$\ghciprompt$> (x+1)^!2
[1 $\pm$ 0] + [2 $\pm$ 0]*x + [1 $\pm$ 0]*x^!2

$\ghciprompt$> import AERN2.Poly.Cheb
Demo AERN2.Poly.Cheb> reduceDegree 1 ((x+1)^!2)
[0.5 $\pm$ <2^(-1)] + [4 $\pm$ 0]*x
\end{lstlisting}

The quality of the polynomial degree reduction is down to using the Chebyshev basis.
In the monomial basis, a degree reduction over $[-1,1]$ would replace the quadratic term with $[1\pm 1]$, leading to the polynomial \lstinline{[1 $\pm$ <2^0] + [2 $\pm$ 0]*x}.
Over the domain $[0,2]$ it would lead to \lstinline{[3 $\pm$ <2^1] + [2 $\pm$ 0]*x}.

Normally, in AERN2 code we do not directly reduce the degree of polynomials.
Instead, there is automatic ``sweeping'', ie dropping insignificant Chebyshev terms, in most operations.
Terms are sweeped as much as possible while respecting a given \emph{accuracy guide}.
The accuracy guide is an accuracy value embedded in each polynomial ball.
In the above examples, it is part of the parameter for \lstinline{varFn}, namely \lstinline{bits 10}, which roughly corresponds to a permitted accuracy loss of $2^{-10}$.
In our simple example, we see the effect of the automatic sweeping only when we make the accuracy guide extremely loose:

\begin{lstlisting}{language=shell}
$\ghciprompt$> x = varFn (unaryIntervalDom, bits 10) () :: ChPoly MPBall
$\ghciprompt$> (x+1)^!2
[1 $\pm$ 0] + [2 $\pm$ 0]*x + [1 $\pm$ 0]*x^!2

$\ghciprompt$> x = varFn (unaryIntervalDom, bits (-2)) () :: ChPoly MPBall
$\ghciprompt$> (x+1)^!2
[1.5 $\pm$ <2^(-1)] + [2 $\pm$ 0]*x

$\ghciprompt$> x = varFn (unaryIntervalDom, bits (-10)) () :: ChPoly MPBall
$\ghciprompt$> (x+1)^!2
[1.5 $\pm$ <2^(1)]
\end{lstlisting}

\subsubsection{Rational function balls}

A rational function ball $\fr$ over the domain $D$ is a pair of polynomial balls 
\((p,e)\), \((q,d)\) over $D$ and a dyadic $m>0$ such that $|q(x)|-d > m$ for each dyadic $x \in D$.
Note that $q$ is either strictly positive or strictly negative over $D$.

Let $\FD[x]$ denote the set of all rational function balls over the domain $D$.
\\
For $f\in C(D)$ we write \(f\in \fr\) iff 
\(\forall x\in[-1,1].\, f(x)\in[p(x)\pm e]/ [q(x) \pm d]\).
\\
The nominal width of $\fr$ is defined as
\[
    |\fr| = \frac{\frac{M\cdot d}{m'} + e}{m'-d}
\]
where $m'$, respectively $M$, is the lower bound, respectively upper bound, 
of $|q|$ over $D$, computed by the maximisation algorithm described in Section~\ref{Section: Range computation}
with accuracy $n=2$.

Within our implementation, rational function balls are represented by the Haskell type \lstinline{Frac MPBall}.
A value of the type \lstinline{Frac MPBall} is a record formed of two \lstinline{ChPoly MPBall} values, one for the 
denominator and one for the numerator, and an \lstinline{MPBall} which is a lower bound to the denominator on the domain.

\subsubsection{Piecewise polynomial balls}

A piecewise polynomial ball $\pp$ over the domain $[-1,1]$ comprises:
\begin{itemize}
  \item
    A dyadic partition $-1 = a_0 < a_1 < \cdots < a_n = 1$, 
  \item
    A family $p_1,\ldots,p_n\in\ID[x]$.
\end{itemize}
While each $p_i$ is defined over $[-1,1]$, it is only used
over the domain $[a_{i-1},a_i]$.
For example, the width of $\pp$ is 
\[
  |\pp | = \max\big\{|p_i(x)| \;\vrule\; 1 \leq i \leq n, \, a_{i-1} \leq x \leq a_{i}  \big\}.
\]

In our implementation, piecewise polynomials are given by the Haskell type \lstinline{PPoly MPBall}.
A value of \lstinline{PPoly MPBall} is a record formed of a value of type \lstinline{DyadicInterval},
representing the domain, and a list of pairs of type \lstinline{(DyadicInterval, ChPoly MPBall)} representing
the partition of the domain together with the polynomial approximations on each piece of the partition. 

\subsubsection{Cauchy representations}

Sequences of polynomials or rational functions are used in the following representations of $\CI$:

\begin{itemize}
  \item Poly encodes an $f\in\CI$ as a sequence $f_P:\N \to \ID[x]$, Haskell type
  \lstinline{Accuracy -> ChPoly MPBall}, with \(\forall n\in\N.\,f\in f_P(n) \)
  and \(\lim_{n\to\infty} |f_P(n)| = 0\).

  \item PPoly encodes an $f\in\CI$ as a sequence $f_\PP$ of piecewise polynomial balls converging to $f$.
  The sequence has the Haskell type \lstinline{Accuracy -> PPoly}, with \(\forall n\in\N.\,f\in f_\PP(n) \)
  and \(\lim_{n\to\infty} |f_\PP(n)| = 0\).

  \item Frac encodes an $f\in\CI$ as a sequence $f_\FR$ of rational function balls converging to $f$.
  The sequence has the Haskell type \lstinline{Accuracy -> Frac}, with \(\forall n\in\N.\,f\in f_\FR(n) \)
  and \(\lim_{n\to\infty} |f_\FR(n)| = 0\).

\end{itemize}

Note that these sequences are not necessarily fast converging, \ie the rate of convergence is not necessarily $2^{-n}$. 
Although our implementations aim for this fast rate, we have not found an efficient way to guarantee it.

\subsection{Locally approximating representations}

Now we define representations that combine features of both point-evaluating representations and globally-approximating representations
where a polynomial or rational approximation with arbitrarily high accuracy is available over any dyadic sub-interval of the domain $[-1,1]$.

The representation $\LPoly$ encodes $f\in \CI $ by
a dependent-type function $F$ that maps each $D\in\ID$ to a Poly-name of $f|_D$.
Its Haskell type is\\ \lstinline{DyadicInterval -> Accuracy -> PPoly}.

Representations $\LPPoly$ and $\LFrac$ are defined analogously.


\section{Range computation}\label{Section: Range computation}

\subsection{Range computation for evaluation-based representations}

The representations $\Fun$, $\BFun$, and $\DBFun$ use a simple maximisation algorithm based on subdivision.
All three representations encode a function $f \colon [l,r] \to \R$ via an interval inclusion.
Our maximisation algorithm takes as input an interval inclusion and returns as output the global maximum as a real number.

Our maximisation algorithm has to take into account that the types \lstinline{UnaryBallFun} and \lstinline{UnaryBallDFun} produce outputs of type \lstinline{CN MPBall}, 
which may represent either an interval value or an error.
In order to model this mathematically, let us introduce the space $\ID_{\bot}$ of dyadic rational intervals with a bottom element $\bot$ added.
This bottom element is intended to represent an undefined value.
We say that a point $x \in \ID_{\bot}$ is \emph{defined} if it is different from $\bot$.
The arithmetic operations on $\ID$ extend to $\ID_{\bot}$ by letting the result of the operation be $\bot$ if any of the operands is $\bot$.
We say that $x \in \ID_{\bot}$ is \emph{certainly smaller than} $y \in \ID_{\bot}$ if both $x$ and $y$ are defined and the right endpoint of $x$ is smaller than the left endpoint of $y$.  

A \emph{maximisation segment} for a function $f \colon [l,r] \to \R$ is a tuple $(a,b,v) \in \D^2 \times \ID_{\bot}$ with 
$[a,b] \subseteq [l,r]$ and $f([a,b]) \subseteq v$.
We define a total preorder on the set of all maximisation segments of $f$ as follows: If $(a,b,v)$ and $(a',b',v')$ are maximisation segments then 
$(a,b,v) \leq (a',b',v')$ if and only if $v'$ is undefined or $v$ is defined and the right endpoint of $v$ is smaller than the right endpoint of $v'$.

\begin{algorithm}[H]\label{Algorithm: maximisation by subdivision}
  \caption{Maximisation}
  \begin{algorithmic}
  \REQUIRE An interval inclusion $F$ of a function $f \colon [l,r] \to \R$.
  An accuracy requirement $n \in \N$.
  \ENSURE An interval $I$ of radius smaller than $2^{-n}$ which contains the global maximum of $f$ over the interval $[l,r]$.
  \STATE \textbf{Procedure:}
  \STATE - Let $c = F([l,r])$.
  \STATE - Create a maximisation segment $M = (l, r, c)$
  \STATE - Create a priority queue $\mathbf{Q} = \{M\}$.
  \LOOP
    \STATE - Remove the largest element $M = (a,b,v)$ from the queue $\mathbf{Q}$.
    \STATE - Let $c = c \cap v$.
    \IF {$c$ has radius smaller than $2^{-n}$}
      \RETURN $c$.
    \ELSE 
      \STATE Let $m = (a + b) / 2$.
      \STATE Let $v_0 = F([a,m])$.
      \STATE Let $v_1 = F([m,b])$.
      \STATE Create a maximisation segment $M_0 = (a,m,v_0)$.
      \STATE Create a maximisation segment $M_1 = (m,b,v_1)$.  
      \STATE If $v_0$ is not certainly smaller than $c$ then add $M_0$ to the queue $\mathbf{Q}$.
      \STATE If $v_1$ is not certainly smaller than $c$ then add $M_1$ to the queue $\mathbf{Q}$. 
    \ENDIF
  \ENDLOOP
  \end{algorithmic}
  \end{algorithm}

\subsection{Real root counting}\label{Section: Root counting}

Our maximisation algorithm for representations based on (local) polynomial or rational approximations is essentially an improved version of Algorithm \ref{Algorithm: maximisation by subdivision}.
It enhances Algorithm \ref{Algorithm: maximisation by subdivision} with a real root counting technique that allows us to locally identify critical points and regions of monotonicity of the polynomial approximation.

This real root counting technique goes back to Vincent \cite{Vincent} and Uspensky \cite{Uspensky} and is based on counting sign variations in the Bernstein basis. 
We follow here the presentation in \cite[Chapter 10]{BasuPollackRoy} (see also the bibliographical notes there).

Let $P \in \Z[x]$ be a polynomial of degree at most $d$ with integer coefficients. Let $(a,b)$ be some bounded open interval. Our goal is to find a good estimate for the number of real roots of $P$ in $(a,b)$ (counted with multiplicity).

We start with an elementary observation known as \emph{Descartes's law of signs}: the number of sign variations in the coefficients\footnote{When counting the sign variations, zeroes are ignored, \ie the number of sign variations in a list is equal to the number of sign variations of the list with all zeroes removed (with the convention that the number of sign variations in the empty list is zero).} of $P$ is an upper bound for the number of \emph{positive} real roots, counted with multiplicity, and the difference between the number of sign variations and the number of positive real roots is even. In particular, if there are no sign variations then $P$ has no positive real roots, and if there is exactly one sign variation then $P$ has a unique (and simple) positive real root.
In order to use this idea to estimate the number of roots in $(a,b)$ we apply a transformation to $P$ to obtain a polynomial $P[a,b]$ whose positive real roots correspond to the roots of $P$ in $[a,b]$.
We define three basic transformations: 
\begin{itemize}
\item $\T_c(P(x)) = P(x - c)$.
\item $\Co_\lambda(P(x)) = P(\lambda x)$. 
\item $\Rec_d(P(x)) = x^dP(1/x)$. 
\end{itemize}
Now consider the polynomial
\[P[a,b] = \T_{-1} \circ \Rec_{d} \circ \Co_{b - a} \circ \T_{-a} (P). \]
Intuitively, the roots in the open interval $(a,b)$ are first shifted to the interval $(0, b - a)$, then contracted into the interval $(0,1)$, then their reciprocal is taken, sending them to $(1, \infty)$, and finally they are shifted to $(0, \infty)$. Thus, the roots of $P$ in $(a,b)$ correspond to the roots of $P[a,b]$ in $(0, \infty)$.
Interestingly, the application of this transformation can be viewed as a change of basis: The Bernstein polynomials of degree $d$ for $a, b$ are
\[B_{d,i}(a,b) = {d \choose i}\frac{(x - a)^i(b - x)^{d - i}}{(b - a)^d}. \]
These polynomials form a partition of unity and a basis of polynomials of degree at most $d$. One can show that $P[a,b]$ is just the representation of $P$ in the Bernstein basis for $a,b$ of degree $d$. 
Hence, we can estimate the number of real roots of $P$ on $a,b$ by first translating $P$ into the Bernstein basis and then counting the sign variations. Let us proceed to show that this estimate is sufficiently good to yield a polynomial-time root counting algorithm.

If $L$ is a list of numbers, we denote by $\var(L)$ the number of sign variations in this list.
Let us denote the coefficients of $P$ in the Bernstein basis of degree $\deg P$ for $a, b$ by $b(P,a,b)$. Let $\IQ$ denote the space of compact intervals with rational endpoints, including degenerate intervals (\ie points). 
The following Lemma combines Descartes's law of signs with the ``Theorem of three circles'' from \cite{BasuPollackRoy}.

\begin{Lemma}\label{Lemma: sign variations and root count}
Let $P \in \Z[x]$, let $[a,b] \in \IQ$ be a compact rational interval. Then
\begin{enumerate}
\item If $\var(b(P,a,b)) = 0$ then $P$ has no roots in $(a,b)$.
\item If $\var(b(P,a,b)) = 1$ then $P$ has a unique root in $(a,b)$.
\item If $P$ has no complex roots in the disk with diameter $[a,b]$, then $\var(b(P,a,b)) = 0$.
\item If $P$ has a unique simple complex root in the union of the two disks which circumscribe the equilateral triangles based in $[a,b]$, then $\var(b(P,a,b)) = 1$.
\end{enumerate}
\end{Lemma}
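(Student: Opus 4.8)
The plan is to split the four claims into a ``Descartes'' part (1--2) and a ``three circles'' part (3--4), both built on the change of basis already recorded above. Write $d = \deg P$; then $b(P,a,b)$ is exactly the list of monomial coefficients of $P[a,b] = \T_{-1}\circ\Rec_d\circ\Co_{b-a}\circ\T_{-a}(P)$ (possibly padded with leading zeros, which do not affect sign variations), and the roots of $P$ in $(a,b)$ correspond to the roots of $P[a,b]$ in $(0,\infty)$ via the restriction to $\R$ of the Möbius transformation $\psi(x) = (b-x)/(x-a)$, which also carries the remaining complex roots of $P$ to those of $P[a,b]$ and preserves multiplicities at finite points.

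For parts 1 and 2 I would apply Descartes's law of signs, as recalled above, directly to $P[a,b]$: $\var(b(P,a,b))$ is the number of sign variations of the coefficient list of $P[a,b]$, hence an upper bound, of the same parity, for the number of positive real roots of $P[a,b]$ counted with multiplicity. If it is $0$, then $P[a,b]$ has no positive real root, so $P$ has no root in $(a,b)$. If it is $1$, the parity clause forces $P[a,b]$ to have exactly one positive real root, counted with multiplicity, hence a simple one, and transporting it back along $\psi$ gives a unique (automatically simple) root of $P$ in $(a,b)$.

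For parts 3 and 4 I would invoke the ``Theorem of three circles'' from \cite[Chapter 10]{BasuPollackRoy}; a sketch of why it applies here is as follows. A direct computation shows that $\psi$ maps the open disk with diameter $[a,b]$ onto the open right half-plane and maps the union of the two open disks that circumscribe the equilateral triangles with base $[a,b]$ onto the open sector $\Sigma = \{\, t \neq 0 : |\arg t| < 2\pi/3 \,\}$ about the positive real axis. For part 3, ``no complex root of $P$ in that disk'' becomes ``no root of $P[a,b]$ in the open right half-plane'', \ie all roots of $P[a,b]$ lie in the closed left half-plane; factoring $P[a,b]$ over $\R$ into factors $t+r$ with $r \ge 0$ and $t^2+pt+q$ with $p \ge 0$, $q > 0$, each of which has coefficients of a single sign, we conclude that $P[a,b]$ has coefficients of a single sign, \ie $\var(b(P,a,b)) = 0$. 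For part 4, symmetry of the region about $\R$ forces the unique root to be real, and the region meets $\R$ exactly in $(a,b)$, so under $\psi$ it becomes a simple root $\rho > 0$ of $P[a,b]$ that is the only root of $P[a,b]$ in $\Sigma$; writing $P[a,b] = c\,(t-\rho)\,Q$, all roots of $Q$ lie in the complementary closed sector $\{\, |\arg(-t)| \le \pi/3 \,\}$ (possibly together with $0$). The content of the three-circles theorem is precisely that such a $Q$ has nonnegative, log-concave coefficients $q_0,\dots,q_n$; granting this, the ratios $q_{i-1}/q_i$ are non-decreasing, so the coefficient list of $(t-\rho)Q$, whose $i$-th entry is $q_{i-1}-\rho q_i$ (with $q_{-1}=q_{n+1}=0$), runs through a block of negatives followed by a block of nonnegatives ending with $q_n>0$ --- exactly one sign variation --- and hence $\var(b(P,a,b)) = 1$.

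I expect parts 1--3 and the computation of the images of the three circles under $\psi$ to be routine. The main obstacle is the input needed for part 4: that a real polynomial all of whose roots lie in a cone of half-angle $\pi/3$ about the negative real axis has log-concave coefficients --- a Newton-type inequality which is the quantitative heart of the three-circles theorem. For this I would follow, rather than re-derive, the development in \cite[Chapter 10]{BasuPollackRoy}.
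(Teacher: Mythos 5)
Your proposal is correct and follows essentially the same route as the paper, which states this lemma without proof as a direct combination of Descartes's law of signs with the ``theorem of three circles'' of \cite[Chapter 10]{BasuPollackRoy}: your parts 1--2 are the Descartes argument applied to $P[a,b]$, and your half-plane/sector analysis for parts 3--4 (with the Newton-type log-concavity inequality deferred to the same source) is the standard proof behind that citation. The geometric identifications you use (the disk with diameter $[a,b]$ mapping to the right half-plane and the union of the two circumscribing disks to the sector of half-angle $2\pi/3$ under $x \mapsto (b-x)/(x-a)$) check out, so there is nothing to add beyond noting that $b(P,a,b)$ agrees with the coefficient list of $P[a,b]$ only up to positive binomial factors, which does not affect sign variation counts.
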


The coefficients of a polynomial in the Bernstein basis can be computed in polynomial time. If the coefficients for an interval $[a,b]$ are known, the coefficients for subintervals $[a,m]$ and $[m,b]$ can be computed from these coefficients in a more efficient manner:

\begin{Lemma}[\cite{BasuPollackRoy}]\label{Lemma: Bernstein coefficients}\hfill
\begin{enumerate}
\item There exists a polytime algorithm which takes as input a polynomial $P \in \Z[x]$ and an interval $[a,b] \in \IQ$ and outputs the list of coefficients of $P[a,b]$.
\item There exists a polytime algorithm which takes as input an interval $[a,b] \in \IQ$, a point $m \in \Q$ (not necessarily in $[a,b]$), and the list of coefficients $b(P,a,b)$ of a polynomial $P$ represented in the Bernstein basis for $a, b$, and outputs the coefficients of $P$ in the Bernstein bases $b(P,a,m)$ for $a,m$ and $b(P,m,b)$ for $m,b$ respectively. 
\end{enumerate}
\end{Lemma}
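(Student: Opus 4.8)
The plan is to treat the two parts separately, in each case exhibiting an explicit algorithm and then verifying that every intermediate rational number has bit-size polynomial in the size of the input, so that the running time is polynomial.

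\textbf{Part 1.} I would compute the composition $P[a,b] = \T_{-1}\circ\Rec_d\circ\Co_{b-a}\circ\T_{-a}(P)$ one transformation at a time and check that each factor is polytime with only polynomial coefficient growth. The map $\Rec_d$ is free: it reverses the coefficient list of its argument (first zero-padded to degree $d$). For $\Co_\lambda$ with $\lambda = s/t\in\Q$, the coefficient of $x^i$ is multiplied by $\lambda^i$, so writing the result over the common denominator $t^d$ yields an integer polynomial whose coefficients have bit-size bounded by the size of the input plus $O(d(\length(s)+\length(t)))$. For $\T_c$ with $c = s/t$, the coefficient of $x^j$ in $P(x-c)$ is $\sum_{i\geq j}\binom{i}{j}(-c)^{i-j}p_i$; since $\binom{i}{j}\leq 2^{d}$ and $c^{i-j}$ has bit-size $O(d\,\length(c))$, clearing the denominator $t^d$ again gives a polynomial-size integer polynomial, computable with $O(d^2)$ ring operations. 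As the definition of $P[a,b]$ composes only a constant number (four) of such transformations, the final coefficients are produced in polynomial time and have bit-size polynomial in $\length(P)+\length(a)+\length(b)$, the accumulated denominator being a bounded power of the denominators of $a$ and $b$.

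\textbf{Part 2.} For the subdivision step I would use the de Casteljau scheme. Set $t = (m-a)/(b-a)$, a rational of polynomial bit-size (here $a<b$, so the denominator is nonzero; note that $t$ need not lie in $[0,1]$ when $m\notin[a,b]$, but this is harmless). Writing the input list as $b(P,a,b) = (b_0^{(0)},\ldots,b_d^{(0)})$, iterate
\[
  b_i^{(r)} \;=\; (1-t)\,b_i^{(r-1)} + t\,b_{i+1}^{(r-1)},
  \qquad r = 1,\ldots,d,\quad i = 0,\ldots,d-r .
\]
A standard induction from the de Casteljau subdivision identities for the Bernstein basis $\{B_{d,i}(a,b)\}_{i}$ shows that $b(P,a,m) = (b_0^{(0)},b_0^{(1)},\ldots,b_0^{(d)})$ and $b(P,m,b) = (b_0^{(d)},b_1^{(d-1)},\ldots,b_d^{(0)})$; since this is a polynomial identity in the $b_i^{(0)}$ and in $t$, it holds for every rational $t$, so the hypothesis $m\in[a,b]$ is not actually needed. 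The scheme performs $O(d^2)$ arithmetic operations. To bound bit-sizes, write $t = u/w$ with integers $u,w$ of polynomial size, clear the common denominator of the $b_i^{(0)}$, and set $c_i^{(r)} := w^r b_i^{(r)}$; then $c_i^{(r)} = (w-u)\,c_i^{(r-1)} + u\,c_{i+1}^{(r-1)}$, so the $c_i^{(r)}$ are integers whose bit-size grows by only $O(\length(u)+\length(w))$ per round, hence remains polynomial after the $d$ rounds, while the accumulated denominator is the single power $w^{d}$. This gives a polytime algorithm.

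The algorithmic content is classical --- this is essentially \cite[Chapter~10]{BasuPollackRoy} --- so the only real work is the complexity bookkeeping: verifying that iterating the Taylor shift and scaling in Part~1, and the $d$ rounds of affine interpolation in Part~2, cause only polynomial blow-up of the rational coefficients. The two quantitative facts that make this go through are that binomial coefficients in rows up to $d$ are bounded by $2^{d}$, and that a power $x^{k}$ with $k\leq d$ of a rational of bit-size $\ell$ has bit-size $O(d\,\ell)$; combined with the observations that Part~1 composes only constantly many transformations and Part~2 runs for exactly $d$ rounds, the polynomial bounds follow. It is worth recording explicitly that all the identities involved are polynomial identities, valid for arbitrary rational $a<b$ and arbitrary rational $m$ --- in particular for $m\notin[a,b]$ --- which is precisely what makes the subdivision step usable inside the branch-and-bound maximisation of Section~\ref{Section: Range computation}.
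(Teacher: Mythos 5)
Your proof is correct: the paper itself gives no argument for this lemma, citing \cite{BasuPollackRoy} instead, and your reconstruction (composing $\T$, $\Co$, $\Rec$ with explicit bit-size bookkeeping for part 1, and the de Casteljau subdivision with cleared denominators for part 2) is exactly the classical argument that citation points to. The remark that the subdivision identities are polynomial in $t$, hence valid for $m \notin [a,b]$, is a welcome explicit justification of the ``not necessarily in $[a,b]$'' clause in the statement.
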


The root counting technique we have sketched here can be used to isolate all real roots of a polynomial $P$ in an interval $[a,b]$ in polynomial time. 
In order to get rid of multiple roots, we first compute the separable part $\tilde{P}$ of $P$, which can be done in polynomial time using signed subresultant sequences (see \eg \cite[Algorithm 8.23]{BasuPollackRoy}). Then we translate to $\tilde{P}[a,b]$, which by Lemma \ref{Lemma: Bernstein coefficients}.1 can be done in polynomial time, and count the real roots. If the result is different from $0$ or $1$, we use Lemma \ref{Lemma: Bernstein coefficients}.2 to compute $\tilde{P}[a,m]$ and $\tilde{P}[m,b]$ and apply this idea recursively, removing all intervals with zero sign variations, keeping all intervals with one sign variation, and splitting all intervals with more sign variations. For a proof that this will take polynomial time see \cite[Algorithm 10.5]{BasuPollackRoy}.

This in turn yields an algorithm for computing the global maximum of a polynomial $P$ on an interval $[a,b]$ in polynomial time: isolate the real roots of the separable part of the derivative $P'$. 
Use the bisection method to approximate the roots up to sufficient accuracy. Evaluate $P$ on the approximate roots and on the endpoints of the interval and take the maximum over the list of results. 

Our maximisation algorithm can be viewed as a combination of this idea with the subdivision scheme used in Algorithm \ref{Algorithm: maximisation by subdivision}.

\subsection{A generic interface for range computation}

Our algorithm provides a generic interface that allows us to use it in several different contexts. For an interval $I$, let 
$|I| = \operatorname{I}/2$ denote its \emph{radius}.

\begin{Definition}\label{Definition : local maximisation data}
Let $f \colon [l,r] \to \R$ be a real function defined on some compact interval. 
A tuple
\[
  (a, b, n, F, B, G) \in \maxdatatype
\]
is called \emph{local maximisation data} for $f$ on $[a,b]$ with accuracy $n$ if:
\begin{enumerate}
\item $[a,b] \subseteq [l,r]$
\item $f(I) \subseteq F(I)$ for all intervals $I \subseteq [a,b]$.
\item $|F(I)| \leq L \cdot |I| + 2^{-n}$ for some $L \in \R$ and all $I \in \IQ$.
\item $F(I) \to F(J)$ whenever $I \to J$ in the Hausdorff metric.
\item $B = b(P,a,b)$ for some polynomial $P$ which has the same roots in $[a,b]$ as the derivative of some function $h\colon [a,b] \to \R$ with $h(I) \subseteq F(I)$ for all $x \in [a,b]$.
\item $G(x)$ and $P(x)$ have the same sign for all $x \in \Q\cap [a,b]$.
\end{enumerate}
Note that the third condition in particular applies to degenerate intervals, and hence implies that $|F(x)| \leq 2^{-n}$ for all $x \in [a,b]$.
\end{Definition}

\begin{Definition}\label{Definition: maximiser}
Let $f \colon [l,r] \to \R$ be a real function defined on some compact interval. A \emph{maximiser} $\mathcal{M}$ for $f$ is a function
\[
\mathcal{M} \colon \D \times \D \times \N 
\to 
\left(\maxdatatype\right), 
\]
such that $\mathcal{M}(a, b, n)$ is local maximisation data for $f$ on $[a,b]$ with accuracy $n$, subject to the following two \emph{monotonicity conditions}:
\begin{enumerate}
\item If $[a',b'] \subseteq [a,b]$ and $n' \leq n$ then the size of the last three components of
$\mathcal{M}(a',b', n')$ is smaller than the size of the last three components of $\mathcal{M}(a, b, n)$.
\item If $[a',b'] \subseteq [a,b]$ and $n' \geq n$ then, if $\mathcal{M}(a',b', \varepsilon') = (a',b', n', F',B',G')$ and 
$\mathcal{M}(a,b, n) = (a,b,n,F,B,G)$, then $F'([c,d]) \subseteq F([c,d])$ for all 
$[c,d] \subseteq [a',b']$.
\end{enumerate}
Note that the inequality for $n$ and $n'$ in the second condition is the opposite of the inequality in the first condition.
\end{Definition}

The generic interface is used by the representations $\Poly$, $\PPoly$, and $\Frac$ as well as by their local counterparts.
Recall that a $\Local \Poly$-name of a function $f \colon [l,r] \to \R$ is a function 
\[
  A \colon \D \times \D \times \N \to \D[x]
\]
satisfying
\[
  |A(a,b,n)(x) - f(x)| < 2^{-n}
\]
for all $x \in [l,r]$.
A $\Poly$-name can be viewed as a special case of this, where $A(a,b,n)$ is independent of $a$ and $b$.

Given $A$ we can compute a maximiser $\mathcal{M}$ for $f$ as follows:
Given $(a,b,n) \in \D \times \D \times \N$,
compute the polynomial $P = A(a,b,n)$
and translate it to the monomial
basis.
Let
\[
  \mathcal{M}(a,b,n) = (a, b, n, F, B, G)
\]
where $F(x) = P(x)$ is the evaluation function of the polynomial $P$,
the vector
$B = b(\alpha \cdot P, a, b)$
represents $\alpha P$ in the Bernstein basis on $[a,b]$, 
where $\alpha$ is the $\operatorname{lcm}$ of the denominators of the coefficients of $P$ in the monomial basis,
and $G(x) = P'(x)$.
Here, the Bernstein coefficients $B$ are computed from a representation in the monomial basis as outlined in the beginning of Section \ref{Section: Root counting}.

Analogously, a $\Local \Frac$-name of a function $f \colon [l,r] \to \R$ is a function 
\[
  A \colon \D \times \D \times \N \to \D(x)
\]
satisfying
\[
  |A(a,b,n)(x) - f(x)| < 2^{-n}
\]
for all $x \in [l,r]$.
Again, a $\Frac$-name can be viewed as a special case of this, where $A(a,b,n)$ is independent of $a$ and $b$.

Given $A$ we can compute a maximiser $\mathcal{M}$ for $f$ as follows:
Given $(a,b,n) \in \D \times \D \times \N$,
compute the rational function $P/Q = A(a,b,n)$
and translate both $P$ and $Q$ into the monomial basis.
Let
\[
  \mathcal{M}(a,b,n) = (a, b, n, F, B, G)
\]
where 
$F(x) = P(x)/Q(x)$ is the evaluation function of the rational function $P/Q$, ${B = b(\alpha (PQ' - QP'), a,b)}$, 
$\alpha$ being the $\operatorname{lcm}$ of the denominators of the coefficients of $PQ' - QP'$ in the monomial basis,
and
${G(x) = (P/Q)'(x)}$.

\subsection{The maximisation algorithm for approximation-based representations}

\begin{Definition}
Let $f \colon [l,r] \to \R$ be a continuous real function.
 Let $L = (a,b,n,F,B,G)$ be local maximisation data for $f$, where $B = b(P,a,b)$.
A \emph{maximisation interval} for $f$ with data $L$ is given by a union type with two variants:
\begin{enumerate}
\item A \emph{search interval} is a tuple
\[(c, d, n, v, G, C) \in \D \times \D \times \N \times \ID \times (\Q \to \Q) \times \Z^*,\]
where $[c,d] \subseteq [a,b]$, $v$ contains the maximum of $f$ over the interval $[c,d]$, and
$C = b(\alpha P, c, d)$ for some constant $\alpha$.
\item A \emph{critical interval} is a tuple
\[(c, d, n, v) \in  \D \times \D \times \N \times \ID, \]
where $[c,d] \subseteq [a,b]$, and $v$ contains the maximum of $f$ over the interval $[c,d]$
\end{enumerate}
In both cases we call $c$ and $d$ the \emph{endpoints} of the maximisation interval, $v$ the \emph{value}, and $n$ the \emph{accuracy}.
\end{Definition}
Maximisation intervals are endowed with the following total preorder: 
Let $M_1$ and $M_2$ be maximisation intervals for $f$, not necessarily associated with the same local approximation data. Then $M_1 \leq M_2$ if and only if the right endpoint of the value of $M_1$ is smaller than the right endpoint of the value of $M_2$.
Our algorithm relies on two auxiliary algorithms for creating search intervals:

\begin{algorithm}[H]
\caption{Creating a maximisation interval}\label{Algorithm: creating maximisation interval}
\begin{algorithmic}
\REQUIRE Local maximisation data $L = (a,b,n,F,B,G)$ for a function $f$.
\ENSURE A maximisation interval for $f$ with data $L$ whose endpoints are $a$ and $b$.
\STATE \textbf{Procedure:}
\STATE - Count the number $v$ of sign variations in $B$.
\IF {$v = 0$}
\RETURN  The critical interval $\left(a,b, n, \max\left(F(a), F(b)\right)\right)$.
\ELSIF {$v = 1$}
\STATE  - Use binary search on $G$ to determine a small interval $[a',b'] \subseteq [a,b]$ which contains the unique 
	   zero of the polynomial with Bernstein coefficients $B$ in $[a,b]$, such that $F([a',b'])$ has radius at most $2^{-n + 1}$.
\RETURN The critical interval $\left(a,b, n, F([a',b'])\right)$
\ELSE
\RETURN The search interval $\left(a,b, n, F([a,b]), G, B\right)$.
\ENDIF
\end{algorithmic}
\end{algorithm}

\begin{algorithm}[H]
\caption{Splitting a search interval}\label{Algorithm: splitting search interval}
\begin{algorithmic}
\REQUIRE A search interval $(a,b,n,v,F,G,B)$ for $f$ with local data $L$. A number $m \in (a,b)$.
\ENSURE Two maximisation intervals $M_l$ and $M_r$ for $f$ with local data $L$ with accuracy $n$ and endpoints $(a,m)$ and $(m,b)$ respectively.
\STATE \textbf{Procedure:}
\STATE - Compute the coefficients of the polynomial represented by $B$ in the Bernstein basis for $[a,m]$ and $[m,b]$ from $B$ using Lemma \ref{Lemma: Bernstein coefficients}.
\STATE - Count the number of sign variations in the left and right coefficients. 
\STATE - Based on the number of sign variations, proceed as in Algorithm \ref{Algorithm: creating maximisation interval}.
\end{algorithmic}
\end{algorithm}

We are now ready to describe our maximisation algorithm.

\begin{algorithm}[H]
\caption{Maximisation}\label{Algorithm: local maximum}
\begin{algorithmic}
\REQUIRE A {maximiser} $\mathcal{M}$ for a function $f \colon [l,r] \to \R$.
An accuracy requirement $n \in \N$.
\ENSURE An interval $I$ of radius smaller than $2^{-n}$ which contains the global maximum of $f$ over the interval $[l,r]$.
\STATE \textbf{Procedure:}
\STATE - Query the maximiser $\mathcal{M}$ for an initial local approximation $L_0 = (l,r,n_0,F,B,G)$ on $[l,r]$ with accuracy $n_0 = 1$.
\STATE - Apply Algorithm \ref{Algorithm: creating maximisation interval} to obtain a maximisation interval $M_0$ for $f$ with data $L_0$.
\STATE - Create a priority queue $\mathbf{Q} = \{M_0\}$.
\LOOP
	\STATE - Remove the largest element $M$ from the queue $\mathbf{Q}$.
	\IF {The value of $M$ has radius smaller than $2^{-n}$}
		\RETURN The value of $M$.
	\ELSE
		\IF {$M$ is a critical interval}
		\STATE - Let $n_M$ denote its accuracy and let $a$ and $b$ denote its endpoints.
    \STATE - Query $\mathcal{M}$ for local approximation data for $f$ on $[a,b]$ with accuracy $2^{-n - 1}$.
		 \STATE - Use Algorithm \ref{Algorithm: creating maximisation interval} to compute a corresponding maximisation interval $M'$.
		 \STATE - Add $M'$ to the priority queue $\mathbf{Q}$.
		\ELSE  
		\STATE If $M$ is a search interval, let $v$ denote its value and let $a$ and $b$ denote its endpoints.
		\IF {The radius of $v$ is smaller than $2^{-n_M + 1}$, where $n_M$ is the accuracy of $M$}
			\STATE - Query $\mathcal{M}$ for local maximisation data $L$ for $f$ on $[a,b]$ with accuracy $n_M + 1$.
			\STATE - Use Algorithm \ref{Algorithm: creating maximisation interval} to compute a maximisation interval $M'$ for $f$ with 						 data $L$.
		\STATE - Add $M'$ to the priority queue $\mathbf{Q}$.
		\ELSE
			\STATE - Let $m = (a + b)/2$.
			\STATE - Use Algorithm \ref{Algorithm: splitting search interval} with input $M$ and $m$ to create two new maximisation							 intervals $M_l$ and $M_r$.
			\STATE - Add $M_l$ and $M_r$ to the the priority queue $\mathbf{Q}$.
		\ENDIF
		\ENDIF
	\ENDIF
\ENDLOOP
\end{algorithmic}
\end{algorithm}

\begin{Theorem}
Algorithm \ref{Algorithm: local maximum} is correct.
\end{Theorem}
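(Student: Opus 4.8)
The plan is to establish correctness by showing three things: (i) that the algorithm maintains a loop invariant guaranteeing the priority queue always "covers" the domain in a way that preserves the global maximum, (ii) that when the algorithm returns, the returned value genuinely contains the global maximum to the required accuracy, and (iii) that the algorithm terminates. I would structure the proof around the invariant, since that is where the real content lies, and treat termination separately.

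First I would set up the \textbf{covering invariant}. At every iteration, let $\mathbf{Q} = \{M_1, \dots, M_k\}$ be the current queue contents with endpoint intervals $[c_i, d_i]$. The invariant asserts: the union $\bigcup_i [c_i, d_i]$ together with the finitely many points already "discarded" via the \emph{certainly smaller} test covers $[l,r]$; each discarded point $x$ has the property that there is some maximisation interval still in (or previously removed from) the queue whose value's right endpoint exceeds $f(x)$; and crucially $\max_i (\text{right endpoint of value of } M_i) \geq \max_{x \in [l,r]} f(x)$ while every value $v_i$ satisfies $v_i \subseteq F_i([c_i,d_i])$ and hence contains $\max_{[c_i,d_i]} f$ by conditions 2 of Definition~\ref{Definition : local maximisation data}. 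I would verify this invariant is established by the initialization (Algorithm~\ref{Algorithm: creating maximisation interval} applied to $L_0$ on all of $[l,r]$: in the $v=0$ case $\max(F(a),F(b))$ really does contain the max because zero sign variations means $P'$ — equivalently $h'$ — has no root in $(a,b)$, so $f$ is monotone-ish on $[a,b]$ and the max is at an endpoint, modulo the error term $F$ carries; in the $v=1$ case the unique critical point is localized; in the $v \geq 2$ case the search interval carries the crude enclosure $F([a,b])$). Then I would check each branch of the main loop preserves it: re-querying $\mathcal M$ for a critical interval at higher accuracy uses monotonicity condition~2 of Definition~\ref{Definition: maximiser} to ensure the new enclosure is contained in the old region's data and still covers $[a,b]$; splitting a search interval via Algorithm~\ref{Algorithm: splitting search interval} partitions $[a,b]$ into $[a,m] \cup [m,b]$ so coverage is maintained; re-querying a search interval at higher accuracy is again handled by monotonicity condition~2.

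The \textbf{correctness of the return value} then follows: when $M$ is removed and has value-radius $< 2^{-n}$, $M$ was the largest element, so the right endpoint of its value is $\geq$ that of every other queue element, which by the invariant is $\geq \max_{[l,r]} f$; and the left endpoint of $M$'s value is $\leq \max_{[c,d]} f \leq \max_{[l,r]} f$ via condition~2 of Definition~\ref{Definition : local maximisation data}; since the value has radius $< 2^{-n}$ and brackets the true maximum from the correct side, it contains $\max_{[l,r]} f$ and meets the accuracy requirement.

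\textbf{Termination} — which I expect to be the main obstacle — requires showing the queue cannot grow forever without some element reaching value-radius $< 2^{-n}$. Here I would argue: the splitting of search intervals drives the endpoint width to zero, and Lemma~\ref{Lemma: sign variations and root count} (parts 3 and 4, the three-circles theorem) guarantees that once an interval is small enough relative to the distance from its diameter-disk to the complex roots of $P$, the sign variation count drops to $0$ or $1$, so search intervals stop splitting and become critical intervals after finitely many subdivisions; meanwhile condition~3 of Definition~\ref{Definition : local maximisation data} ($|F(I)| \leq L|I| + 2^{-n}$) together with re-querying at successively higher accuracies forces the value-radius of the relevant interval below $2^{-n}$. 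The subtle point is that the priority queue is ordered by value right-endpoint, so one must ensure the algorithm does not get stuck endlessly refining intervals that are \emph{not} near the true maximiser; this is where the \emph{certainly smaller than} pruning in Algorithm~\ref{Algorithm: maximisation by subdivision} and the analogous comparison preorder on maximisation intervals enter — intervals whose enclosures provably cannot contain the maximum are never re-added. I would make this rigorous by a potential/measure argument: partition the finitely many subintervals produced into those containing a critical point of (the relevant) $P$ and those that are monotone; the monotone ones are resolved immediately, the critical ones are finite in number and each is refined to accuracy $2^{-n}$ in finitely many steps by the Lipschitz-type bound, and continuity of $F$ under Hausdorff convergence (condition~4) ensures these refinements actually converge. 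Assembling these pieces gives termination, completing the proof.
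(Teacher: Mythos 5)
The half of your proof that concerns the validity of the returned value is essentially the paper's argument: you maintain the invariant that some queue element's value encloses the global maximum, and you use the priority ordering (removal of the element with the largest value right endpoint) in an exchange argument to conclude that the removed value must contain the maximum, with the radius test giving the accuracy. That matches the paper, which states the invariant ``by construction'' and runs the same contradiction with the right endpoints. Your extra bookkeeping about ``discarded'' points is unnecessary but harmless --- except that it misdescribes the algorithm: the \emph{certainly smaller than} pruning belongs to Algorithm \ref{Algorithm: maximisation by subdivision} (the evaluation-based subdivision method), not to Algorithm \ref{Algorithm: local maximum}, which never discards queue elements; coverage of $[l,r]$ is preserved simply because each processed interval is replaced by a refinement or by its two halves (Algorithms \ref{Algorithm: creating maximisation interval} and \ref{Algorithm: splitting search interval}).

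The genuine gap is in your termination argument. You make it rest on Lemma \ref{Lemma: sign variations and root count}(3)--(4) forcing the sign-variation count down to $0$ or $1$ after finitely many subdivisions, so that search intervals ``stop splitting''. This fails when the polynomial underlying $B$ has a \emph{multiple} root in the interval (e.g.\ a degenerate critical point of the approximation): part (4) requires a unique \emph{simple} root, so intervals shrinking onto such a point can retain $\var \geq 2$ under arbitrary subdivision, and your ``monotone pieces are resolved immediately, critical pieces are finite in number'' partition also breaks down because the polynomial is replaced at every re-query, so there is no fixed finite set of critical points across the run. Moreover, the resolution you offer for the worry about refining intervals far from the maximiser --- pruning by the comparison preorder --- does not exist in Algorithm \ref{Algorithm: local maximum}, and is not needed: the algorithm returns as soon as \emph{any} removed interval has small value radius, and the exchange argument already certifies that value. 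The paper's termination argument avoids root-counting entirely: by the Lipschitz condition (Definition \ref{Definition : local maximisation data}.3) splitting shrinks the $L\cdot|I|$ term, the accuracy increase on re-query shrinks the $2^{-n}$ term, and the monotonicity condition (Definition \ref{Definition: maximiser}.2) prevents values from growing, so the value radii of the intervals spawned by any processed interval decrease geometrically (roughly halve) irrespective of how many sign variations remain; hence along any chain of processed descendants the radius falls below $2^{-n}$ and the algorithm stops. You do mention this mechanism in passing, but your proof as structured makes the sign-variation collapse the load-bearing step, and that step is false in the multiple-root case; you would need to rewrite the termination part around the Lipschitz/monotonicity estimate (the three-circles bound is what powers the \emph{complexity} analysis of the slow algorithm, not the correctness proof).
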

\begin{proof}
The queue will never be empty, for whenever an interval is removed from the priority queue, either the algorithm terminates or at least one new interval is added to the queue. The algorithm will terminate eventually, for if a maximisation interval $M$ is removed from the queue, then by the Lipschitz condition on the function $F$ (Definition \ref{Definition : local maximisation data}.3) and the monotonicity of the maximiser (Definition \ref{Definition: maximiser}.2) the radius of the values of the maximisation intervals which are added to the queue is at most half the radius of the value of $M$, and the algorithm terminates as soon as an interval with sufficiently small radius is removed.

Let $m = \max_{x \in [-1,1]} f(x)$. By construction, in every iteration of the loop, one of the maximisation intervals in the queue contains $m$. We claim that if an interval $M$ with value $v = [v_l,v_r]$ is removed from the queue then $v$ has to contain $m$.
Since $[v_l,v_r]$ contains a value of $f$, we have $v_l \leq m$. Thus, if $[v_l,v_r]$ does not contain the maximum, then $m > r$. 
But there exists some interval $M$ in the queue which contains the maximum of $f$. Let $v' = [v_l',v_r']$ denote its value.
Then we have $m \in [v_l', v_r']$ and hence $v_r' \geq m > v_r$, contradicting the fact that $M$ is removed first.
It follows that if $M$ is removed and the radius of its value $v$ is smaller than $2^{-n}$, then $v$ is a valid output.
\end{proof}

In order to estimate the running time of Algorithm \ref{Algorithm: local maximum}, we introduce an auxiliary algorithm which is easier to analyse and whose running time dominates the running time of Algorithm \ref{Algorithm: local maximum}.

\begin{algorithm}[H]
\caption{Slow polynomial maximisation}\label{Algorithm: slow maximisation}
\begin{algorithmic}
\REQUIRE A {maximiser} $\mathcal{M}$ for a function $f \colon [l,r] \to \R$. An accuracy requirement $n \in \N$.
\ENSURE An interval $I$ of radius smaller than $2^{-n}$ which contains the global maximum of $f$ over the interval $[l,r]$.
\STATE \textbf{Procedure:}
\STATE - Query $\mathcal{M}$ for a local approximation $L = (l,r,n + 1,F,B,G)$ on $[l,r]$ with accuracy $n + 1$.
\STATE - Apply Algorithm \ref{Algorithm: creating maximisation interval} to obtain a maximisation interval $M_0$ for $f$ with data $L$.
\STATE - Create a priority queue $\mathbf{Q} = \{M_0\}$.
\LOOP
	\STATE - Remove the largest element $M$ from the queue $\mathbf{Q}$.
	\IF {The value of $M$ has radius smaller than $2^{-n}$}
		\RETURN The value of $M$.
	\ENDIF
	\FORALL {$M \in \mathbf{Q}$}
		\IF {If $M$ is a search interval and the radius of its value is bigger than $2^{-n}$}
			\STATE - Let $m = (a + b)/2$.
			\STATE - Use Algorithm \ref{Algorithm: splitting search interval} with input $M$ and $m$ to create two new maximisation							 intervals $M_l$ and $M_r$.
			\STATE - Add $M_l$ and $M_r$ to the the priority queue $\mathbf{Q}$.
		\ENDIF
	\ENDFOR
\ENDLOOP
\end{algorithmic}
\end{algorithm}

In contrast to Algorithm \ref{Algorithm: local maximum}, Algorithm \ref{Algorithm: slow maximisation} uses a single global approximation rather than local approximations. It processes every element of the priority queue in every step, rather than just looking at the ``most promising'' element.
It thus simulates a certain worst-case scenario for Algorithm \ref{Algorithm: local maximum}.
By the monotonicity assumption on maximisers, the running time of Algorithm \ref{Algorithm: local maximum} is majorised by the running time of Algorithm \ref{Algorithm: slow maximisation}.
In order to estimate the running time of Algorithm \ref{Algorithm: slow maximisation} we need a more technical estimate on the growth of the bitsize of the coefficients in the Bernstein basis.

\begin{Lemma}[\cite{BasuPollackRoy}]\label{Lemma: Bernstein coefficients size bound}
Consider the algorithm from Lemma \ref{Lemma: Bernstein coefficients}.2 which computes from the Bernstein coefficients $b(P,l,r)$, the coefficients $b(P,l,m)$ and $b(P,m,r)$. If $\beta$ is a bound on the bitsize of the elements of $b(P,l,r)$ and $\beta'$ is a bound on the bitsize of $l$ and $r$, then the bitsize of the elements of $b(P,l,m)$ and $b(P,m,r)$ is bounded by $\operatorname{(\deg P + 1)\beta' + \beta}$.
\end{Lemma}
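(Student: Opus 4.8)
The plan is to unfold the subdivision algorithm of Lemma~\ref{Lemma: Bernstein coefficients}.2 explicitly and then track the bitsize of every quantity it produces. Recall that this algorithm is (a cleared-denominator form of) the de~Casteljau subdivision: writing $d = \deg P$, $t = (m - l)/(r - l)$, and $b(P,l,r) = (b_0,\ldots,b_d)$, it forms the triangular array
\[
  b_i^{(0)} = b_i, \qquad b_i^{(k)} = (1 - t)\, b_i^{(k-1)} + t\, b_{i+1}^{(k-1)} \quad (1 \le k \le d),
\]
after which $b(P,l,m) = (b_0^{(0)}, b_0^{(1)}, \ldots, b_0^{(d)})$ and $b(P,m,r) = (b_0^{(d)}, b_1^{(d-1)}, \ldots, b_d^{(0)})$. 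Thus it suffices to bound the bitsize of each entry $b_i^{(k)}$ with $k \le d$. (In the intended application $m = (a+b)/2$, so the bitsize of $m$ is at most $\beta' + 1$; this is what makes the statement, which does not mention $m$, meaningful, and I would record it at the outset.)

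First I would put each $b_i^{(k)}$ over the common denominator $(r-l)^k$, obtaining by an easy induction on $k$ from the recurrence and $1 - t = (r-m)/(r-l)$ the closed form
\[
  b_i^{(k)} = \frac{1}{(r-l)^k}\sum_{j=0}^{k}\binom{k}{j}(r - m)^{k-j}(m - l)^j\, b_{i+j}.
\]
Then I would bound numerator and denominator separately, using subadditivity of bitsize under products, its near-subadditivity under sums (with a $\lceil\log_2(k+1)\rceil$ penalty for a sum of $k+1$ terms), and the fact that bitsize is multiplied by at most the exponent under powers. Since $l$, $r$, $m$ have bitsize at most $\beta' + 1$, the differences $r - m$, $m - l$, $r - l$ have bitsize $O(\beta')$; raising to a power $\le k \le d$ costs a factor $d$; $\binom{k}{j} \le 2^d$ costs $d$ further bits; each $b_{i+j}$ costs $\beta$; the sum costs $\lceil\log_2(d+1)\rceil$. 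Putting this together with the $O(d\beta')$ bound on $(r-l)^k$ gives a bound of the shape $(\deg P + 1)\beta' + \beta$ up to lower-order terms, and with the normalisation of the subdivision and the convention for the bitsize of rationals used in \cite{BasuPollackRoy} these lower-order terms are absorbed, yielding exactly the stated bound.

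An alternative and in some ways cleaner route is to argue level by level directly on the recurrence rather than through the closed form: one step replaces two quantities of bitsize at most $\sigma$ by one of bitsize at most $\sigma + O(\beta')$, and iterating $d$ times from the initial bound $\beta$ gives $\beta + d\cdot O(\beta')$. I expect the only real obstacle to be in the constants: getting precisely $(\deg P + 1)\beta'$, rather than $d\beta' + O(d)$, requires care about how the additive $O(1)$ per level is accounted for, which in turn depends on the precise form of the subdivision algorithm (whether fractions are cleared, whether the common denominator $(r-l)^k$ is maintained throughout, whether a normalisation step is applied). The mathematical content is routine; matching it to the sharp statement of \cite{BasuPollackRoy} is the delicate point, and in practice one simply cites that reference.
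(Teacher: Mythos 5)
This lemma is not proved in the paper at all: it is imported from \cite{BasuPollackRoy}, so there is no internal argument to compare yours against, only the standard analysis in that reference --- and that is exactly what you reconstruct, namely de~Casteljau subdivision together with bitsize bookkeeping, either via the closed form $b_i^{(k)}=(r-l)^{-k}\sum_{j=0}^{k}\binom{k}{j}(r-m)^{k-j}(m-l)^{j}b_{i+j}$ or level by level along the recurrence. Your sketch is sound, and your two side remarks are the right ones to make: the statement is only meaningful because in the intended use $m=(l+r)/2$ has bitsize controlled by that of $l$ and $r$, and the only delicate point is the literal constant $(\deg P+1)\beta'+\beta$. That worry, however, is not load-bearing here: within this paper the lemma is invoked only in the proof of Theorem~\ref{Theorem: maximisation in polynomial time}, where any bound that is polynomial in $\deg P$, $\beta$ and $\beta'$ would do, and in the actual application $m$ is the midpoint, so $r-m=m-l=(r-l)/2$ and your closed form collapses to $b_i^{(k)}=2^{-k}\sum_{j}\binom{k}{j}b_{i+j}$, making the bookkeeping even simpler than your general estimate suggests. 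For the constant exactly as stated one does, as you say, defer to the precise normalisation of the subdivision algorithm in \cite{BasuPollackRoy}, which is all the paper itself does.
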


\begin{Theorem}\label{Theorem: maximisation in polynomial time}
Algorithm \ref{Algorithm: slow maximisation} runs in polynomial time.
\end{Theorem}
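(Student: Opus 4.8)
The plan is to bound three quantities separately --- the subdivision depth, the total number of maximisation intervals ever created, and the bitsize of every number handled --- and then observe that each primitive step of Algorithm~\ref{Algorithm: slow maximisation} (a Bernstein subdivision via Algorithm~\ref{Algorithm: splitting search interval}, a sign-variation count, the binary search of Algorithm~\ref{Algorithm: creating maximisation interval}, an evaluation of $F$ or $G$, a priority-queue operation) costs time polynomial in these quantities. Throughout, write $L_0 = (l,r,n+1,F,B,G)$ for the single piece of local maximisation data returned by the initial query, let $P \in \Z[x]$ be the polynomial with $B = b(P,l,r)$ (item~5 of Definition~\ref{Definition : local maximisation data}), let $d = \deg P$, and let $\tau$ bound the bitsize of the coefficients of $P$ and of $l,r$. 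Since $F$ is the evaluation function of a fixed polynomial of degree $O(d)$ and coefficient bitsize $O(\tau)$, the Lipschitz constant $L$ of Definition~\ref{Definition : local maximisation data}.3 may be taken to be $2^{O(\tau + \log d)}$, so that $\log L$ is polynomial in the size of $L_0$; here ``polynomial time'' will mean polynomial in $n$ and in the size of $L_0$.

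First, I would bound the depth. By Definition~\ref{Definition : local maximisation data}.3, applied with accuracy $n+1$, a search interval spanning an interval of width $w$ has a value of radius at most $Lw/2 + 2^{-n-1}$; hence once $w < 2^{-n}/L$ the value has radius below $2^{-n}$ and the interval is never split again in Algorithm~\ref{Algorithm: slow maximisation}. Starting from $r - l \le 2$, this happens after $h := O(n + \tau + \log d)$ bisections, so no maximisation interval is ever subdivided beyond depth $h$; in particular only finitely many intervals are created and the algorithm terminates.

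The crucial step is bounding the number of intervals. Every search interval has $\var \ge 2$ for its Bernstein vector; by parts~3 and~4 of Lemma~\ref{Lemma: sign variations and root count} this forces the union of the two disks circumscribing the equilateral triangles on that interval to contain at least two roots of $P$ counted with multiplicity. Since at a fixed level these disk-unions cover the real line with bounded multiplicity and $P$ has at most $d$ roots, at most $O(d)$ search intervals occur at each of the $\le h$ levels; this is exactly the estimate that makes the subdivision tree of Algorithm~10.5 in \cite{BasuPollackRoy} polynomial. Every critical interval and every unsplit search interval is produced by splitting a search interval (or is the initial $M_0$), so at most $O(dh)$ maximisation intervals are ever created; a fortiori the total number of splits is $O(dh)$, the priority queue never holds more than $O(dh)$ elements, and --- since each iteration of the main loop removes one interval and re-inserts none --- the loop runs for $O(dh)$ iterations.

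Finally, I would control the bitsizes, and hence the per-step cost. All Bernstein vectors are obtained from $B$ by iterated subdivision at dyadic midpoints (Lemma~\ref{Lemma: Bernstein coefficients}.2); after $k$ levels the endpoints are dyadics of bitsize $O(k + \tau)$, so by Lemma~\ref{Lemma: Bernstein coefficients size bound} the coefficient bitsize increases by at most $(d+1)\cdot O(k + \tau)$ per level and therefore stays $O(\tau + dh(h+\tau))$, which is polynomial. Consequently every primitive step runs in polynomial time: Bernstein subdivision and degree-$O(d)$ polynomial arithmetic by Lemma~\ref{Lemma: Bernstein coefficients}, counting sign variations by a linear scan, the binary search of Algorithm~\ref{Algorithm: creating maximisation interval} in $O(h)$ sign evaluations of $G$, the evaluation of $F$ at a rational or interval argument to accuracy $2^{-n}$ by Horner's scheme, and each queue operation in $O(\log(dh))$ comparisons of rational endpoints of polynomial bitsize. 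Multiplying the $O(dh)$ bound on iterations (and on splits) by this polynomial per-step cost yields the theorem. I expect the middle step to be the main obstacle: without the ``theorem of three circles'' counting argument of \cite{BasuPollackRoy} one could only bound the number of search intervals at level $k$ by $2^{k}$, which is exponential in $h$; everything else is routine bookkeeping built on Lemmas~\ref{Lemma: Bernstein coefficients} and~\ref{Lemma: Bernstein coefficients size bound}. (Note that only termination and the time bound are needed, not correctness of the output, since Algorithm~\ref{Algorithm: slow maximisation} serves purely as a majorant for the running time of Algorithm~\ref{Algorithm: local maximum}.)
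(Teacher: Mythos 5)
Your proposal is correct and follows essentially the same route as the paper's proof sketch: bound the subdivision depth by $O(n)$ (plus the Lipschitz data) via Definition~\ref{Definition : local maximisation data}.3, bound the number of non-leaf (search) intervals per level by $\deg P$ via the three-circles part of Lemma~\ref{Lemma: sign variations and root count}, and control coefficient bitsizes via Lemma~\ref{Lemma: Bernstein coefficients size bound}, concluding polynomially many polynomial-cost steps. Your extra bookkeeping (explicit bound on the Lipschitz constant, per-step cost of splitting, sign counting, binary search and queue operations) only sharpens the paper's sketch rather than changing the argument.
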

\begin{proof}[Proof (Sketch)]
Let $L = (l,r,n + 1,F,B,G)$ be the local maximisation data which is used by the algorithm, where $B = b(P,l,r)$. The main thing to show is that the number of intervals the algorithm processes is bounded polynomially in $n$ and the size of $B$. The intervals and coefficients considered in the algorithm can be arranged in a binary tree as follows:

\begin{itemize}
\item Label the root with the interval $([l,r], B)$.
\item If $([a,b], b(P,a,b))$ is the label of a node of the tree...
\begin{itemize}
\item If $F([a,b])$ has diameter smaller than $2^{-n}$, then the node is a leaf. Otherwise...
\item If $\var(b(P,a,b)) = 0$ the node is a leaf.
\item If $\var(b(P,a,b)) = 1$ the node is a leaf.
\item If $\var(b(P,a,b)) > 1$ the node has two successors, labelled respectively with $([a,m], b(P,a,m))$ and $([m,b], b(P,m,b))$.
\end{itemize}
\end{itemize}

It follows from the Lipschitz condition on $F$ that the height of the tree is bounded by $O(n)$. 
At each level of the tree, consider the number of nodes $([a,b],b(P,a,b))$ with $\var(b(P,a,b)) > 0$. By Lemma \ref{Lemma: sign variations and root count}.3, each such node can be associated with a complex root of the polynomial $P$. 
Hence there are at most $\deg P$ nodes which aren't leaves at each level. It follows that the number of nodes in the tree is bounded polynomially in $n$ and $\deg P$.
It remains to show that the size of the Bernstein coefficients associated with each interval is bounded polynomially in $n$ and the bitsize of $B$. Let us write $t_P = \deg P + 1$ for the number of terms in $P$. Let $\beta$ denote a bound on the bitsize of the elements of $B = b(P,l,r)$ and $\beta'$ denote a bound on the bitsize of $l$ and $r$. Then by Lemma \ref{Lemma: Bernstein coefficients size bound}, the size of $b(P,a,m)$ and $b(P,m,r)$ is bounded by $\operatorname{t_P\beta' + \beta}$ and the size of $m$ is bounded by $\beta' + 2$. It follows that the size of the Bernstein coefficients on the $n^{\text{th}}$ level of the tree is bounded by 
\[nt_P\left(\beta' + 2(n - 1) \right) + \beta. \]
Hence the bitsize is bounded by $nt_P\left(\beta' + 2(n - 1) \right) + \beta$, which is polynomial in $n$ and the bitsize of $B$. 
Hence we perform polynomially many operations on objects of polynomially bounded size, so that the algorithm runs in polynomial time. 
\end{proof}

\section{Root finding for polynomials}

The real root counting technique we have just described is also used to find the roots of a real polynomial.
This is an important subroutine in the pointwise maximisation algorithm for polynomials and piecewise polynomials.
Given a dyadic polynomial $P$ we can in principle compute complete information on the set of all roots of $P$ in the sense that 
we can compute for each accuracy requirement $n \in \N$ a list of intervals $(I_m)_m$ such that each root of $P$ is contained in one the intervals $I_m$, each interval contains a root of $P$, and each interval has radius at most $2^{-n}$.

However, our root counting technique cannot directly be used to achieve this when the polynomial has multiple roots, as roots are counted with multiplicity and the existence of a root is only guaranteed when the number of sign variations is equal to $1$.
One could in principle eliminate all multiple roots by computing the separable part first (see \cite[Algorithm 8.23]{BasuPollackRoy}),
but for our purpose it suffices to compute an upper bound to the set of roots.

This functionality is provided by the Haskell function \lstinline{findRootsWithEvaluation}
which is found in the module
\lstinline{AERN2.Poly.Power.RootsIntVector}.
In addition to an integer polynomial $P$ it takes as its second input a function $e$ which computes a value of interest on each rational interval
and a function $\operatorname{OK?}$ which checks if the value of interest meets a certain requirement. 
It then outputs a list of intervals such that each root of the polynomial $P$ is contained in one of the intervals together with the values of interest on those intervals.

\begin{algorithm}[H]\label{Algorithm: root finding}
  \caption{Polynomial Root Finding with Evaluation}
  \begin{algorithmic}
  \REQUIRE 
    An interval $[l,r]$ with rational endpoints.
    An integer polynomial $P$ in the monomial basis.
    A function $e \colon \IQ \to A$ where $A$ is a discrete set.
    A function $\operatorname{OK?} \colon A \to \{0,1\}$.
  \ENSURE A finite list $(I_m, v_m)$ of pairs of a rational interval $I_m$ and the value $v_m = e(I_m)$
  such that $\operatorname{OK?}(v_m) = 1$ for all $m$ and every root of $P$ is contained in one of the intervals $I_m$.
  \STATE \textbf{Procedure:}
  \STATE - Compute the Bernstein coefficients $b(P,l,r)$ of $P$.
  \STATE - Initialise a list $\mathbf{L} = \left\{([l,r], b(P,l,r))\right\}$.
  \STATE - Initialise a list $\mathbf{Res} = \{\}$.
  \LOOP 
  \FORALL {$(I, b)$ in the list $\mathbf{L}$}
  \STATE - Compute the sign variations $\var(b)$ in $b$.
  \IF {The number of sign variations is $1$}
    \STATE Bisect the interval $I$ into two intervals $I_1$ and $I_2$.
    \STATE Compute the unique index $k \in \{1,2\}$ such that $P$ changes its sign on the endpoins of $I_k$.
    \STATE Compute the value $v = e(I_k)$.
    \IF {$\operatorname{OK?}(v) = 1$}
    \STATE Add $(I_k, v)$ to the list $\mathbf{Res}$.
    \ELSE  
    \STATE Compute the Bernstein coefficients $b(P, I_k)$.
    \STATE Add $(I_k, b(P, I_k))$ to the list $\mathbf{L}$.
    \ENDIF 
  \ELSIF {The number of sign variations is greater than $1$}
    \STATE Bisect the interval $I$ into two intervals $I_1$ and $I_2$.
    \FORALL {$k \in \{1,2\}$}
    \STATE Compute the value $v_k = e(I_k)$.
    \IF {$\operatorname{OK?}(v_k) = 1$}
    \STATE Add $(I_k, v_k)$ to the list $\mathbf{Res}$
    \ELSE 
    \STATE Compute the Bernstein coefficients $b(P, I_k)$.
    \STATE Add $(I_k, b(P, I_k))$ to the list $\mathbf{L}$.
    \ENDIF
    \ENDFOR
  \ENDIF
  \ENDFOR
  \ENDLOOP
  \RETURN The list $\mathbf{Res}$
  \end{algorithmic}
  \end{algorithm}

  It follows from Lemma \ref{Lemma: sign variations and root count} that Algorithm \ref{Algorithm: root finding} is correct whenever it terminates.
  It is guaranteed to terminate if for every $x \in [l, r]$ there exists an $\varepsilon > 0$ such that $\operatorname{OK?}(e(I)) = 1$ for all intervals $I$ containing $x$ with $|I| < \varepsilon$.

\section{Pointwise maximisation}

\subsection{Pointwise maximisation for local and evaluation-based representations}

Computing the pointwise maximum of two functions with respect to the $\BFun$ representation is straightforward: 
simply compute the pointwise maximum using interval arithmetic.

To compute the pointwise maximum with respect to $\DBFun$, we need to take into account that this potentially introduces points of non-differentiability.
Thus, given $\DBFun$-names $(\varphi, \varphi')$ and $(\psi, \psi')$ of functions $f$ and $g$, a $\DBFun$-name $(\xi, \xi')$ of the pointwise maximum is computed as follows:
$\xi$ is the pointwise maximum of the $\BFun$-names $\varphi$ and $\psi$.
To compute $\xi'(I)$, first compare $\phi(I)$ and $\psi(I)$.
If $\varphi(I)$ is certainly greater than $\psi(I)$ then let $\xi'(I) = \varphi'(I)$.
If $\psi(I)$ is certainly greater than $\varphi(I)$ then let $\xi'(I) = \psi'(I)$. 
If neither is the case, let $\xi'(I)$ be the interval hull of $\varphi'(I)$ and $\psi'(I)$, \ie the smallest interval containing both $\varphi'(I)$ and $\psi'(I)$.

Given a $\Fun$-name $(\varphi, \omega)$ of a function $f$ and a $\Fun$-name 
$(\psi, \mu)$ of a function $g$, 
a $\Fun$-name of the pointwise maximum $\max\{f, g\}$ 
is given by 
$(\xi, \nu)$,
where 
$\xi = \max\{\varphi, \psi\}$
and 
$\nu(n) = \max\{\omega(n + 1), \mu(n + 1)\}$.

The pointwise maximum for ``local'' representations, like any other binary operation, is computed by lifting the corresponding operation for the ``global'' representation. 

\subsection{Pointwise maximisation for piecewise polynomials}

Pointwise maximisation for piecewise polynomials is provided by the module \lstinline{AERN2.PPoly.MinMax}.
The computation of the pointwise maximum of two piecewise polynomials is easily reduced to the computation of the pointwise maximum of two polynomials as a piecewise polynomial.
This is achieved by the following algorithm:

\begin{algorithm}[H]
  \caption{Piecewise polynomial maximisation}\label{Algorithm: PPoly pointwise maximum}
  \begin{algorithmic}
  \REQUIRE A pair of dyadic polynomials $P$ and $Q$ on a domain $D$. An accuracy requirement $n \in \N$.
  \ENSURE A piecewise polynomial $f$ satisfying 
    $|f(x) - \max\{P(x),Q(x)\}| \leq 2^{-n}$
    for all
    $x \in D$.
  \STATE \textbf{Procedure:}
  \STATE - Initialise an empty list $\mathbf{L} = \{\}$.
  \STATE - Let $C = P - Q$.
  \STATE - Use Algorithm \ref{Algorithm: root finding} to compute a finite list of intervals $(I_m)_m$ such that each root of $C$ is contained in one of the intervals and $|C(x)| < 2^{-n}$ for all $m$ and all $x \in I_m$.
  More formally, call Algorithm \ref{Algorithm: root finding} with the input polynomial being $C$, the evaluation function $e$ being evaluation of $C$ over an interval using ball arithmetic, and the function $\operatorname{OK?}$ being the function that checks if a given interval has radius smaller than $2^{-n}$.
  \FORALL {$m$}
  \STATE - Add the pair $(I_m, P)$ to the list $\mathbf{L}$.
  \ENDFOR
  \STATE - Extend the list $(I_m)_m$ to a partition of the domain $D$,
  \ie compute a list of disjoint intervals $(J_k)_k$ such that the $J_k$'s are disjoint from the $I_m$'s except at the endpoints and $D$ is covered by the union of the $J_k$'s and $I_m$'s.
  \FORALL {$k$}
    \STATE - Let $c$ be the centre of $J_k$.
    \IF {$P(c) >= Q(c)$}
    \STATE - Add the pair $(J_k, P)$ to the list $\mathbf{L}$. 
    \ELSE 
    \STATE - Add the pair $(J_k, Q)$ to the list $\mathbf{L}$.
    \ENDIF
  \ENDFOR
  \RETURN The piecewise polynomial encoded by the list $\mathbf{L}$.
  \end{algorithmic}
  \end{algorithm}

  The correctness of Algorithm \ref{Algorithm: PPoly pointwise maximum} follows immediately from that of Algorithm \ref{Algorithm: root finding}.
  The algorithm terminates as the diameter of $C(I)$ converges to zero as the diameter of $I$ converges to zero.
  It can in fact be shown to run in polynomial time by using similar ideas as for Theorem \ref{Theorem: maximisation in polynomial time}.

\subsection{Pointwise maximisation for polynomials}

Pointwise maximisation of polynomials is provided by the module 
\lstinline{AERN2.Poly.Cheb.MinMax}.

To compute the pointwise maximum of two polynomials $P$ and $Q$, we first use the Algorithm \ref{Algorithm: local maximum} to attempt to prove that $P(x) >= Q(x)$ or $Q(x) >= P(x)$ for all $x$ in the domain.
If this succeeds we output $P$ or $Q$ respectively.

If it fails, we first compute a Chebyshev interpolation $h$ of the function $\max\{P,Q\}$ using an algorithm described in \cite{BT97} based on the Discrete Cosine Transform.

We then bound the approximation error $|h - \max\{P,Q\}|$ essentially by computing the pointwise maximum of $P$ and $Q$ as a piecewise polynomial $f$ as in Algorithm \ref{Algorithm: PPoly pointwise maximum} and by computing the maximum of $|h(x) - f(x)|$ on each piece, using Algorithm \ref{Algorithm: local maximum}.


\section{Integration}

\subsection{Integration for evaluation-based representations}

Integration of functions using the evaluation-based representation is performed using Riemann sums:

\begin{algorithm}[H]
  \caption{Evaluation-based integration}\label{Algorithm: Fun Integration}
  \begin{algorithmic}
  \REQUIRE 
    An interval inclusion $F \colon \ID \to \ID$ of a continuous real function $f \colon D \to \R$.
    A sequence $(p_i)_i \subseteq \N$ of precisions.
    Two dyadic numbers $l, r \in D$ with $l < r$.
    An accuracy requirement $n \in \N$.
  \ENSURE 
    An interval $I$ of radius at most $2^{-n}$ containing the value of the integral $\int_l^r f(x) \operatorname{dx}$.
  \STATE \textbf{Procedure:}
  \STATE - Compute the value $F([l,r]) \cdot |l - r|$ using precision $p_0$ and write the result in a variable $A_0$.
  \IF {The radius of $A_0$ is smaller than $2^{-n}$}
  \RETURN $A_0$.
  \ELSE
    \STATE - Compute the value $F([l,r]) \cdot |l - r|$ using precision $p_1$ and write the result in a variable $A_1$.
    \STATE - Let $k = 0$.
    \WHILE {The radius of $A_1$ is strictly smaller than that of $A_0$}
    \STATE - Put $A_0 = A_1$.
    \STATE - Compute the value $F([l,r]) \cdot |l - r|$ using precision $p_{k + 2}$ and write the result in the variable $A_1$.
    \IF {The radius of $A_1$ is smaller than $2^{-n}$}
    \RETURN $A_1$.
    \ENDIF
    \STATE - Put $k = k + 1$.
    \ENDWHILE
    \STATE - Let $m = (l + r)/2$.
    \STATE - Call Algorithm \ref{Algorithm: Fun Integration} with function inclusion $F$, precision sequence $(p_{i + k})_i$, endpoints $l$ and $m$, and accuracy requirement $n + 1$.
    Call the result $I_0$.
    \STATE - Call Algorithm \ref{Algorithm: Fun Integration} with function inclusion $F$, precision sequence $(p_{i + k})_i$, endpoints $m$ and $r$, and accuracy requirement $n + 1$.
    Call the result $I_1$.
    \RETURN $I_0$ + $I_1$.
  \ENDIF 
  \end{algorithmic}
  \end{algorithm}

  The termination of Algorithm \ref{Algorithm: Fun Integration} follows from the assumption that the interval inclusion $F$ converges to $f$ as the precision is increased and the fact that the Riemann sums for $f$ converge to the integral of $f$ as the mesh width of the subdivision converges to zero, together with the continuity of the integral operator on the space $C(D)$.
  The algorithm is easily seen to be correct.

 \subsection{Integration for representations based on global approximations}

 Integration of a function $f$ given by a sequence of polynomial approximations $(P_n)_n$ is performed in the obvious way: Find a sufficiently accurate polynomial approximation $P_n$ and integrate it symbolically.
 When $f$ is given by a sequence of piecewise polynomial approximations, an analogous method is used.

 To integrate a function given by a sequence of rational approximations, 
 we first find a sufficiently good rational approximation, then translate it into a piecewise polynomial, using the division algorithm described in \cite{MainArticle}, which is then integrated.

 \subsection{Integration for locally approximating representations}

  The integration of functions given by a Haskell type of the form $\Local a$ is reduced to the integration for the type $a$ by a simple heuristic:
  Given an accuracy requirement $n \in \N$, subdivide the domain into $n$ pieces of the same size and compute the integral over each piece to accuracy $n + \lceil\log_2 (n)\rceil + 1$,
  and add up the results.

\bibliographystyle{abbrv}
\bibliography{fnreps-technical}
\nocite{*}

\end{document}